\definecolor{darkblue}{RGB}{0,0,127} % choose colors
\definecolor{darkgreen}{RGB}{0,130,80}
\definecolor{darkred}{RGB}{150,10,10}
\pgfplotsset{compat=newest}
\newlength\figureheight
\newlength\figurewidth
\newcommand{\Z}{\ensuremath{\mathbb{Z}}}
\newcommand{\topo}{\text{TO}}
\newcommand{\R}[1]{{Ref.~\onlinecite{#1}}}
\newcommand{\strop}{\ensuremath{S}}
\newtheorem{lemma}{Lemma}
\newtheorem{claim}{Proposition}
\newcommand{\drawgenerator}[8]{%
\xymatrix@!0{%
& #8 \ar@{-}[ld]\ar@{.}[dd] \ar@{-}[rr] & & #7 \ar@{-}[ld]  \\%
#1 \ar@{-}[rr] \ar@{-}[dd] &  & #2 \ar@{-}[dd] &            \\%
& #6 \ar@{.}[ld] &  & #5 \ar@{-}[uu] \ar@{.}[ll]       \\%
#3 \ar@{-}[rr] &  & #4 \ar@{-}[ru]                       %
}%
}
\newcommand{\plaquette}[4]{
\xymatrix@!0{%
#1 \ar@{-}[r] \ar@{-}[d]  & #2 \ar@{-}[d] 
\\
#3 \ar@{-}[r]  & #4
}}
\newcommand{\FFtwo}{{\mathbb{F}_2}}
\newcommand{\FF}{{\mathbb{F}}}
\newcommand{\ZZ}{{\mathbb{Z}}}
\newcommand{\mm}{{\mathfrak{m}}}
\DeclareMathOperator{\coker}{\mathrm{coker}}
\DeclareMathOperator{\rank}{\mathrm{rank}}
\begin{document}

  \title{Compactifying fracton stabilizer models} 
\author{Arpit Dua}
  \affiliation{Department of Physics, Yale University, New Haven, CT 06520-8120, USA}
  \affiliation{Yale Quantum Institute, Yale University, New Haven, CT 06520, USA}

 \author{Dominic~J. Williamson}
  \affiliation{Department of Physics, Yale University, New Haven, CT 06520-8120, USA}
  
  \author{Jeongwan Haah}
  \affiliation{Quantum Architectures and Computation,
Microsoft Research, Redmond, WA 98052, USA}
  
  \author{Meng Cheng}
  \affiliation{Department of Physics, Yale University, New Haven, CT 06520-8120, USA}

  \begin{abstract}
We investigate two dimensional compactifications of three dimensional fractonic stabilizer models. We find the two dimensional topological phases produced as a function of compactification radius for the X-cube model and Haah's cubic code. 
Furthermore, we uncover translation symmetry-enrichment in the compactified cubic code that leads to twisted boundary conditions.  
  \end{abstract}

  \maketitle

  \section{ Introduction}
  \label{sec:intro}

The exploration of topological order in three dimensions led to the surprising discovery of fracton models, whose topological excitations have emergent mobility restrictions~\cite{chamon2005quantum,haah2011local,kim20123d, yoshida2013exotic,haah2013commuting,PhysRevB.92.235136, vijay2016fracton, PhysRevB,PhysRevLett.116.027202,PhysRevB.95.155133,you2018symmetric,PhysRevB.97.134426,Schmitz2019}. A large class of these models can be realized by local commuting projector Hamiltonians, and obey lattice definitions of topological order~\cite{bravyi2010topological}. They are broken up into type-I~\cite{ PhysRevB.81.184303,doi:10.1080/14786435.2011.609152,bravyi2011topological, PhysRevB.95.245126,vijay2017isotropic,vijay2017generalization,PhysRevB.96.165106,PhysRevLett.119.257202,PhysRevB.97.155111,PhysRevB.97.041110,prem2018cage,Bulmash2018} and type-II categories~\cite{ PhysRevLett.107.150504,bravyi2013quantum,Haah2013,haah2014bifurcation} depending on whether or not there are any string operators in the theory. 
Similar phenomena have been observed in higher rank gauge theories~\cite{PhysRevB.95.115139,PhysRevB.96.035119,PhysRevD.96.024051,PhysRevB.96.125151,PhysRevB.96.115102,PhysRevB.97.085116,prem2018pinch,PhysRevB.97.235102,PhysRevLett.120.195301,Pretko2018,PretkoGauge2018,Slagle2018,gromov2017fractional, ma2018fracton,PhysRevB.97.235112,bulmash2018generalized,Williamson2018Fractonic,Gromov2018}. For a review of fracton lattice models and gauge theories see \R{nandkishore2018fractons}.

In this work we explore the properties of fracton models under compactification. Broadly speaking, compactification starts from a $d$-dimensional manifold of the form $M_{d-1}\times S^1$, and passes to the limit where the linear dimension of the $(d-1)$-manifold $M_{d-1}$ is much larger than the size of the circle $S^1$. The $d$-dimensional theory is thus reduced to a $(d-1)$-dimensional one, which is oftentimes easier to analyze.

Previously, compactification has proven a very useful tool to understand topological phases that fall into the framework of topological quantum field theory (TQFT)~\cite{DijkgraafWitten,yetter1993tqft,crane1993categorical,crane1994four,carter1999structures,walker2012,kapustin2013higher,shawnthesis,williamson2016hamiltonian,barenz2016dichromatic,bullivant2016topological,Delcamp2018},  in both two and three dimensions. In 3D,
compactification reduces complicated loop excitations to more familiar point-like anyons, and allows for computation of the so-called three-loop braiding statistics of loop excitations~\cite{jiang2014generalized,Juven2014,wang2014braiding}. 
In fact, many known (3+1)D TQFTs, including abelian twisted Dijkgraaf-Witten gauge theories, can be completely classified according to their behavior under compactification~\cite{wang2015topological}. In two dimensions, compactification was exploited in the classification of symmetry-enriched topological phases (SET)~\cite{PhysRevB.90.045142, Zaletel2017, barkeshli2016reflection}. 

It is now widely recognized that fracton phases are sensitive to more than just topology, with their properties depending on certain geometric structures of the underlying manifold~\cite{ PhysRevB.97.165106, PhysRevB.96.195139,shirley2017fracton,shirley2018FoliatedFracton,shirley2018Fractional,shirley2018universal,shirley2018Foliated,Slagle2018foliated,Tian2018}. In this work we only consider translation-invariant fracton models defined on regular cubic lattices. For the purpose of defining compactification, it is necessary to keep track of the length of the system in the compactified direction $\hat{z}$, which is well-defined when there is translation symmetry along $\hat{z}$. We find that type-I and type-II fracton models behave very differently under compactification. In particular, we show that the fractal mobility constraint of type-II fractons is mapped onto the interplay between translation symmetries  and the topological order in the resulting (2+1)D model.  
 This further provides a method to distinguish different type-II fracton models. 
For example, while Haah's cubic code and the renormalized cubic code B are closely related, as the latter was obtained from real space renormalization of the former, they are nevertheless distinct phases of matter~\cite{haah2014bifurcation} which is reflected in the difference between their translation actions.
Surprisingly, we also find that the compactified models may exhibit subsystem symmetry-enriched topological order~\cite{you2018subsystem,subsystemphaserel,devakul2018fractal,devakul2018universal,kubica2018ungauging} 
leading to spurious contributions to their topological entanglement entropies~\cite{bravyiunpublished,PhysRevB.94.075151,Williamson2018}. 

The paper is laid out as follows: in Section~\ref{compactification} we describe some general features of compactifying 3D stabilizer models, in Section~\ref{sec:tqft} we review the general structure of a compactified (3+1)D TQFT, giving the toric code as an example, 
in Section~\ref{sec:type1} we turn to the compactification of type-I fracton models, with a focus on the X-cube model, 
in Section~\ref{sec:type2} we present the core results of the paper concerning compactification of type-II fracton models, focusing on Haah's cubic code, 
finally we conclude in Section~\ref{sec:conclusion}. 
In the Appendix we present the data we have obtained by compactifying a large range of fracton stabilizer models. 

\section{Compactifying topological codes}
 \label{compactification}
We begin with some general comments on compactifying topologically ordered lattice models. 

Consider a gapped 3D lattice model defined on a spatial manifold $M_2\times S^1$, where we take $S^1$ to be the $\hat{z}$ direction with periodic boundary condition. Taking the limit $L_x,L_y\rightarrow \infty$, while keeping $L_z$ fixed to a finite value produces a 2D gapped lattice model.  We follow the generally held belief that all 2D gapped spin lattice models are described by TQFTs and hence fall into stable phases labeled by an anyon model (modular tensor category) together with a consistent chiral central charge~\cite{Moore1989,kitaev2006anyons}.  We also allow the possibility of ``unstable" phases that are direct sums of the stable phases that will instantaneously decohere into one of the constituents. Physically this corresponds to some kind of spontaneous symmetry breaking. This happens whenever the 3D model supports nontrivial string operators that can wrap the $\hat{z}$ direction. In this case, the string operators wrapping the $S^1$ cycle become local operators in the 2D compactified model. Generally speaking, these operators can acquire finite expectation values, splitting the degeneracy between different sectors labeled by eigenvalues of the string operators. For models with zero correlation length (e.g. stabilizer codes), no such splitting occurs and the resulting ground space is a direct sum of all sectors.

For the infinite cubic lattice, our focus in this paper,  we may specify an arbitrary lattice vector when we compactify a model. Hence we may associate a function $n_{\vec{v}}^{\{ s_i \}}$ to any lattice vector $\vec{v}$ and eigenvalues $\{ s_i \}$ of the  string operators that become local under compactification. This function takes values in the space of 2D topological phases. 

\subsection{Structure theorems for stabilizer codes}

\begin{figure}
\includegraphics[width = 3.5in]{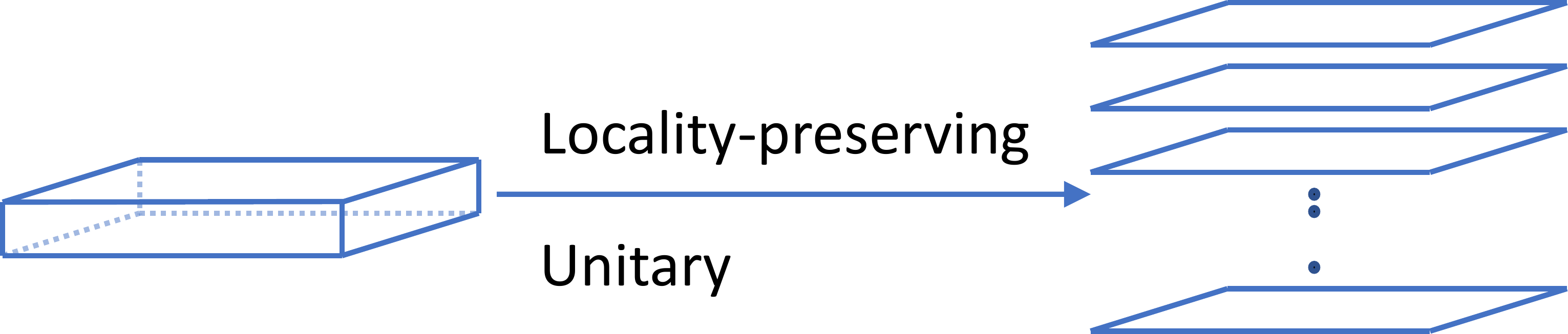}
\caption{Structure theorem~\cite{bombin2014structure,haah2016algebraic,Haah2018a}: a translation invariant topological stabilizer Hamiltonian in 2D is equivalent to copies of toric code and a trivial state via a locality-preserving unitary. This can be used to decompose a compactified 3D model (left) into copies of the 2D toric code (right).  }
\end{figure}

From hereon, we focus on translation invariant stabilizer Hamiltonians with topological order. 
These Hamiltonians are specified by a choice of local Pauli stabilizer generators $h^{(i)}$ which are given, up to a sign, by a tensor product of Pauli $X,Y,$ and $Z$ matrices, where 
\begin{align}
     X = \begin{pmatrix}
        0 & 1  \\
        1 & 0
    \end{pmatrix}
    \, ,
    && 
     Y = \begin{pmatrix}
        0 & -i  \\
        i & 0
    \end{pmatrix}
    \, ,
    &&
    Z = \begin{pmatrix}
        1 & 0  \\
        0 & -1
    \end{pmatrix}
    \, .
\end{align}
The stabilizer generators become the interaction terms 
\begin{align}
    H = \sum_{\vec{v}} (\openone -  h^{(i)}_{\vec{v}}) \, ,
\end{align}
where $h^{(i)}_{\vec{v}}$ indicates a local generator $h^{(i)}$ after translation by a lattice vector $\vec{v}$, in fact we can loosen this translation invariance by including a $\vec{v}$-dependent sign factor. 
We require the Hamiltonian to consist of commuting terms, be frustration free, and satisfy the topological order condition defined in Ref.~\cite{Haah2013}. 
We say that a stabilizer Hamiltonian is CSS~\cite{PhysRevA.54.1098,Steane2551} if each generator $h^{(i)}$ consist exclusively of either Pauli $X$ or $Z$ terms.

Starting with a translation invariant 3D topological stabilizer Hamiltonian, we can compactify it along a lattice vector $\vec{v}$ and fix the eigenvalues $\{ s_i \}$ of any string operators that become local. This leaves us with a translation invariant (up to signs) 2D topological stabilizer Hamiltonian. We can then rely on the existing rigorous classification of such models~\cite{bombin2014structure,Haah2018a} which ensures that they are equivalent, up to a locality preserving Clifford unitary circut, to a finite number of copies of the 2D toric code~\cite{qdouble} and some trivial product state. 
If the stabilizer Hamiltonian is CSS a Clifford local unitary suffices~\cite{haah2016algebraic}. 
This implies that, after fixing out any local degeneracy, the complete topological phase invariant $n_{\vec{v}}^{\{ s_i \}}$ for a compactified 3D topological stabilizer model is simply the number of copies of 2D toric code it is equivalent to. 

There is a similar structure theorem for 1D stabilizer models~\cite{Haah2013} that is useful for calculating the compactification of a 2D stabilizer Hamiltonian. This structure theorem states that any translation invariant 1D stabilizer model is equivalent to copies of the 1D quantum Ising model and some trivial product state.

\subsection{Calculating the 2D topological order}

We now outline how to analyze the 2D stabilizer models obtained via compactification. For the remainder of the paper we consider compactifying along a spatial axis, usually chosen to be $\hat{z}$. 
Technically, we pass from 3D to 2D by grouping all the qubits along a single column in the $\hat{z}$ direction, which is being compactified, onto a single lattice site. Thus, a 3D model with $Q$ qubits per site is mapped to a 2D model with $Q L_z$ qubits per site, where $L_z$ is the compactification radius. Next we fix the eigenvalue of any string operators wrapping the $\hat{z}$ direction by adding a term proportional to them to the Hamiltonian. By the structure theorem we know this 2D stabilizer code is equivalent to a number $n_{L_z}^{\{ s_i \}}$ of toric codes. Hence we can find a basis of $2n_{L_z}^{\{ s_i \}}$ anyonic excitations that generate all others via fusion. These anyons appear at the ends of $2n_{L_z}^{\{ s_i \}}$ string operators, that can be organized into anti-commuting pairs. 
The commutation matrix of these string operators, defined below, determines the $S$ matrix invariant of the relevant topological phase~\cite{haah,ribbons}. Half the rank of the commutation matrix is equal to the number of toric codes and hence in this case it is a complete invariant. It furthermore does not suffer from spurious contributions, due to subsystem symmetries, which may afflict attempts to extract the number of copies of toric code from a topological entanglement entropy calculation~\cite{Williamson2018}.

To calculate the number of anti-commuting string operator pairs in a 2D stabilizer model we consider two overlapping strip-like sub-regions of the 2D lattice, one horizontal and one vertical, as shown in Fig.~\ref{strips}.
We search for Pauli string operators in these subgregions that may only create excitations at their end points. 
These are included in the kernel of the excitation map with closed boundary conditions along the length of the strips, and open boundary conditions at the ends, 
where closed boundary conditions correspond to including stabilizer generators that cross the boundary, and open boundary conditions do not. 

We label the generating set of Pauli string operators found on the horizontal (vertical) strip by $\strop^{h}_i,\, i=1, \dots , N$ $(\strop^{v}_j,\, j=1, \dots, N')$. However, many of these strings may  create trivial anyons in the vacuum sector. To address this we construct the commutation matrix $C$, which is defined elementwise by 
\begin{align}
    C_{i,j} = 
    \begin{cases} 
   0 & \text{if }  [\strop^{h}_i,\strop^{v}_j]=0 \, , \\
   1       & \text{if }  [\strop^{h}_i,\strop^{v}_j] \neq 0 \, .
  \end{cases}
\end{align}
The $\Z_2$-rank of $C$ is the number of independent pairs of anti-commuting string operators. For these string operators to anti-commute the anyons they create must be nontrivial. Hence, half of the rank gives the number of copies of 2D toric code equivalent to the model, as each copy contributes two independent string operator generators. Therefore, if the 2D stabilizer model has been obtained from a compactification specified by $L_z,\,\{s_i \}$, we have
\begin{align}
    n_{L_z}^{\{ s_i \}} = \frac{1}{2} \rank C
    \, .
\end{align}

\subsection{Calculating the action of translation}

\begin{figure}
 \centering
\includegraphics[scale=0.35]{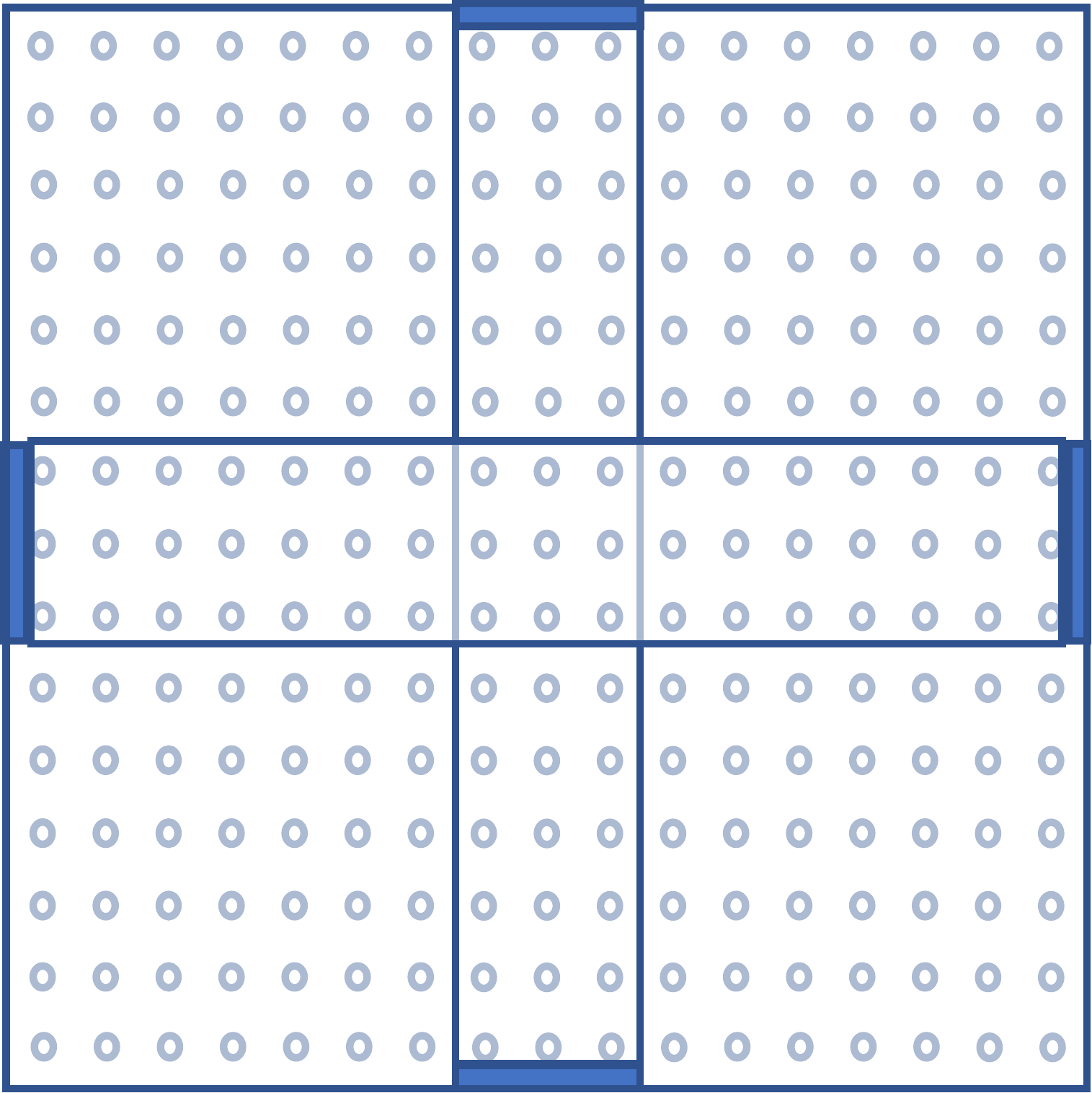}
\caption{
A section of a 2D stabilizer model is shown, with circles depicting the qubits. Two strips are shown, with solid blue bars on the ends denoting open boundary conditions, and thin lines denoting closed boundary conditions. Half the rank of the commutation matrix between operators on the strips gives the number of copies of toric code.
}
\label{strips}
% \vspace{-0.7ex} 
\end{figure}

Calculating the action of translation upon the anyons requires more detailed information about the topological order. We first isolate a basis of string operators that move a generating set of anyons  from those that create topologically trivial excitations. This is achieved by bringing the commutation matrix into Smith normal form
\begin{align}
    U C V = D \, ,
\end{align}
where $U$ and $V$ are invertible (unimodular) matrices and $D$ is the Smith normal form, which is diagonal. 
Since we are working over $\Z_2$, the nonzero entries of $D$ consist of $\rank C$ entries of 1 along the diagonal. 
Hence we can truncate $U$ to its first  $\rank C$ rows, call this $\bar{U}$, and $V$ to its first $\rank C$ columns, call this $\bar{V}$, and we have
\begin{align}
    \bar{U} C \bar{V} = \openone_{\rank C} \, .
\end{align}
The rows of $\bar{U}$ specify products of string operators $\strop^{h}_i$ that give a basis for the horizontal strip: $\bar{\strop}^{h}_i,\, i=1,\dots,\rank C$.  
Similarly the columns of $\bar{V}$ specify a conjugate basis for the vertical strip: $\bar{\strop}^{v}_j,\, j=1,\dots,\rank C$, satisfying
\begin{align}
       [\bar{\strop}^{h}_i,\bar{\strop}^{v}_j]  =
   \begin{cases} 
   0 & \text{if }  i\neq j \, , \\
   2 \bar{\strop}^{h}_i \bar{\strop}^{v}_j       & \text{if }  i=j \, .
  \end{cases}
\end{align}
For CSS code models, we can further ensure that the string operators in our basis move self-bosonic anyons by treating the Pauli $X$ and $Z$ strings separately. We have done so for all the CSS code examples considered in this work. 

The string operator basis we have found allows us to calculate the action of translation in the horizontal direction by simply shifting the vertical string operators over one site, $T_x^{-1} \bar\strop^{v}_j T_x$, and similarly for the vertical translations $T_y^{-1} \bar\strop^{h}_i T_y$. 
This is because we can use the modularity of anyon braiding to identify the shifted vertical strings with the original basis via their commutation relations with the unshifted horizontal strings. 
To implement this we form a new commutation matrix given by 
\begin{align}
    (C_x)_{i,j} = 
  \begin{cases} 
   0 & \text{if }  [\bar\strop^{h}_i,T_x^{-1} \bar\strop^{v}_j T_x]=0 \, , \\
   1       & \text{if }  [\bar\strop^{h}_i,T_x^{-1} \bar\strop^{v}_j T_x] \neq 0 \, .
  \end{cases}
\end{align}
Taking the Smith decomposition we find 
\begin{align}
U_x C_x V_x &= \openone 
\nonumber \\
&=  C_x V_x U_x  \, ,
\end{align}
hence the change of basis matrix for the vertical strings $V_x U_x$ gives the generator of the translation action on the anyons $\bar{T}_x$. 
Similar reasoning yields $\bar{T}_y = V_y U_y $.

The action of translation on the anyons is fully specified by  $\bar{T}_x$ and $\bar{T}_y$ in the examples we have considered. This is because these matrices describe the permutation of anyons by translation, and no symmetry fractionalization occurs in the examples considered. 
However, it can be hard to interpret these matrices directly. 
In the examples below we have focused on a simple invariant of the translation action:  $O_i$, the order of the permutation. That is, the smallest nonzero solution to $\bar{T}_i^{O_i}=\openone$. These orders give the dimensions of the smallest unit cell $O_x \times O_y$ one must take such that the coarse grained translation does not permute anyons. Moreover, for an $L_x \times L_y$ system with periodic boundary conditions, if an $O_i$ does not divide $L_i$ the boundary conditions will be twisted. This leads to a reduction in the ground state degeneracy, compared to the untwisted value $4^{n_{L_z}^{\{ s_i\}}}$. 

Under compactification, the remaining 3D translation symmetry along the $\hat{z}$ direction is mapped to a $\mathbb{Z}_{L_z}$ on-site permutation symmetry. Following the method above we can work out the action $\bar{T}_z$ of this symmetry on the anyons in the resulting 2D model, and derive properties such as its order $O_z$.

 \section{Compactification of TQFTs} 
 \label{sec:tqft}
 
In this section we review the compactification of 3D topological phases that are described by TQFTs. It is generally believed, on physical grounds, that TQFTs in (3+1)D support two types of excitations at low energy: particles and loops, both of which are mobile. These excitations can be created and moved by appropriate topological string and membrane operators, which can be locally deformed as long as their boundary is kept fixed. By definition, topological invariance ensures that the compactified (2+1)D theory will also be a TQFT and that all choices of compactification direction and radius leading to the same theory. The resulting (2+1)D theory will generally not be stable, but will correspond to a \emph{cat state} consisting of an unstable sum of multiple different stable TQFTs. This is due to topologically nontrivial string operators wrapping the compactified direction, which become local operators in the compactified theory. We expect (under a modularity assumption) that such compactified string operators can be used to project onto a definite flux, or string,  excitation threading the compactified $S^1$. Topologically nontrivial membrane operators on a plane orthogonal to the compactified direction become global symmetries of the resulting (2+1)D model, and the compactified string operators split the symmetry breaking degeneracy. On the other hand, membrane operators wrapping the compactification direction become string operators for flux loops wrapping the compactified direction, which are mapped to point-like particles. Hence, the topological order after compactification is given by
\begin{align}
\topo^\text{comp}_{3\text{D}} \cong \bigoplus_{s} \topo_{2\text{D}}^{\{s\}} .
\end{align}
where $\{s\}$ denotes the eigenvalues of the string operators. 
Assuming the original set of string excitations and dual string operators was complete, each of the resulting $2D$ theories should be stable. 
Hence $n_{\vec{v}}^{\{ s\} } = \topo_{2\text{D}}^{ \{ s \} } $ for arbitrary $\vec{v}$.

In the most general setting, it can be quite difficult to calculate the resulting (2+1)D topological orders $ \topo_{2\text{D}}^{\{s\}} $. However, the answer is known for all twisted Dijkgraaf-Witten theories~\cite{DijkgraafWitten} (twisted quantum double models). 
This reduction has proven useful for the calculation of many properties of (3+1)D TQFTs, including the classification of loop defects, and the computation of 3-loop braiding statistics, and generalizations thereof, which completely classify abelian Dijkgraaf-Witten theories~\cite{jiang2014generalized,Juven2014,wang2014braiding,wang2015topological}.

\subsection{2D toric code}
As a warm up we consider compactifying the 2D toric code. This example captures many of the essential features of compactifying a TQFT, and is relevant for the type-I example to follow. 

The 2D Toric code model is commonly defined with one qubit per edge of a square lattice. Here we group a pair of qubits from orthogonal, adjacent edges onto each site. The local stabilizer generators in the Hamiltonian are then given by 
 \begin{align}
 \label{eq:tc}
 \begin{array}{c}
 \plaquette{ZI}{II}{ZZ}{IZ}
 \quad
 \plaquette{XI}{XX}{II}{IX}
 \end{array}
 \, .
\end{align}
The logical operators on a torus are given by anti-commuting pairs of string operators $\bar{X}_{\hat{i}},\bar{Z}_{\hat{j}}$, where ${\hat{i}\neq\hat{j} \in \{\hat{x},\hat{y}\}}$. One pair of representative logical operators is given by 
\begin{align}
\bar{X}_{\hat{x}} = \prod_x IX_{(x,0)} \, ,
&&
\bar{Z}_{\hat{y}} = \prod_y IZ_{(0,y)}
\, ,
\end{align}
and the other is defined similarly. 

Under compactification along the $\hat{y}$ direction we obtain two copies of the 1D quantum Ising model, up to local unitary. 

To see this, we make use of the structure theorem for translation invariant stabilizer codes in one dimension. This reduces the problem to counting the number of local-nonlocal pairs of logical operators in the compactified model. Since the compactification of 2D Toric code along one direction maps two out of the four logical string operators to local logical operators, the topological ground space degeneracy reduces to an unstable symmetry breaking degeneracy. The remaining two logical operators map to global spin-flip symmetries for two copies of the Ising model, consistent with the fourfold ground state degeneracy. 

\subsection{3D toric code}
For the next example, we consider the 3D toric code. Similar to the 2D toric code, we have grouped three qubits onto each lattice site, and the stabilizer generators are given by 
\begin{align}
\begin{array}{c}
\drawgenerator{IXI}{XXX}{}{IIX}{}{}{XII}{}
\quad
\drawgenerator{}{}{}{}{IIZ}{IZZ}{}{IZI}
\quad
\\
\drawgenerator{}{}{IIZ}{}{}{ZIZ}{}{ZII}
\quad
\drawgenerator{}{}{IZI}{}{ZII}{ZZI}{}{}
\end{array}
\, .
\end{align}	
Representative logical operators on a torus are generated by three anti-commuting membrane-string operator pairs $\bar{X}_{\hat{i}},\bar{Z}_{\hat{i}}$, where $\hat{i}=\hat{x},\hat{y},\hat{z}$. 
A representative pair is given by 
\begin{align}
\bar{X}_{\hat{x}}& = \prod_{y,z} XII_{(0,y,z)}\, ,
&
\bar{Z}_{\hat{x}}& =  \prod_{x} ZII_{(x,0,0)}\, ,
\end{align}
and similarly for $\hat{y}$ and $\hat{z}$. 

Compactifying the $\hat{z}$ direction makes $\bar{Z}_{\hat{z}}$ into a local logical operator that anti-commutes with a global symmetry given by $\bar{X}_{\hat{z}}$. 
The compactification further maps $\bar{X}_{\hat{x}},\bar{X}_{\hat{y}}$ into nontrivial string operators, while $\bar{Z}_{\hat{x}},\bar{Z}_{\hat{y}}$ remain nontrivial string operators, and their commutation relations are preserved. 
Moreover, these generate all the logical operators, up to stabilizers. 
To make a precise identification with copies of toric code we need to be slightly more careful, and study the superselection sectors. We find a basis of two inequivalent superselection sectors for a fixed eigenvalue of $\bar{Z}_{\hat{z}}$. These are represented by: a compactified string excitation that consists of $Z$-stabilizer violations wrapping the $\hat{\hat{z}}$ direction, and a single $X$-stabilizer violation. This, combined with the anti-commutation of the aforementioned string operators that move these anyons, identifies the resulting (2+1)D phase as equivalent to a direct sum of two toric codes 
\begin{align}
\text{TC}_{3\text{D}}^{\text{comp}} \cong \text{TC}_{2\text{D}} \oplus \text{TC}_{2\text{D}} \, .
\end{align}

For this example, and the 2D example above, translation acts trivially.

\section{Compactification of Type-I fracton models}
\label{sec:type1}
In this section we consider the compactification of type-I fracton models, treating the X-cube model as a prototypical example.  
According to their definition~\cite{vijay2016fracton}, type-I fracton models feature some completely immobile point-like excitations (i.e. fractons) alongside other excitations that are mobile in one or more directions. This includes lineons which can move only in one direction, and planons which can move in two directions. Moreover, the fractons should not be composites of mobile excitations that can move in different directions.
In type-I models there can be an extensive number of inequivalent string operators in the direction of compactification that, after being mapped to local operators, leading to an extensive number of symmetry breaking sectors. 
 
\subsection{X-cube model}
The stabilizer generators for the X-cube model~\cite{vijay2016fracton} are given by 
\begin{align}
\begin{array}{c}
\drawgenerator{IXI}{}{IXX}{IIX}{XIX}{XXX}{XII}{XXI}
\quad
\drawgenerator{}{}{}{}{IIZ}{}{IZZ}{IZI}
\quad
\\
\drawgenerator{ZIZ}{}{IIZ}{}{}{}{}{ZII}
\quad
\drawgenerator{}{}{IZI}{ZZI}{ZII}{}{}{}
\end{array}
\, .
\end{align}
 Here we have grouped 3 qubits onto each vertex, similar to the 3D toric code example. 
 We consider this model on an $L_x \times L_y \times L_z$ cuboid with periodic boundary conditions. 
To enumerate the logical operators let us first define anti-commuting pairs  $\bar{X}^{\hat{i}}_{\hat{k},\ell},\bar{Z}^{\hat{j}}_{\hat{k},\ell}$, where $\hat{i}\neq \hat{j} \neq \hat{k} \in \{ \hat{x},\hat{y} ,\hat{z} \} $ and $\ell = 0,\dots,L_k-1,$ along intersecting pairs of non-contractible loops as follows
\begin{align}
\bar{X}^{\hat{x}}_{\hat{z},\ell} = \prod_x XII_{(x,0,\ell)} \, ,
&&
\bar{Z}^{\hat{y}}_{\hat{z},\ell} = \prod_y ZII_{(0,y,\ell)}\, ,
\end{align}
and similarly for other permutations of $\hat{x},\hat{y},\hat{z}$. 
This set of operators is over-complete due to the relations
\begin{align}
    \prod_{\ell} \bar{X}^{\hat{i}}_{\hat{k},\ell} = \prod_{\ell} \bar{X}^{\hat{k}}_{\hat{i},\ell} 
    \, ,
    &&
    \bar{Z}^{\hat{i}}_{\hat{j},0} = \bar{Z}^{\hat{i}}_{\hat{k},0}
    \, .
\end{align}
Removing $\bar{X}^{\hat{x}}_{\hat{z},0},\bar{X}^{\hat{y}}_{\hat{z},0},\bar{X}^{\hat{y}}_{\hat{x},0},\bar{Z}^{\hat{x}}_{\hat{z},0},\bar{Z}^{\hat{y}}_{\hat{z},0},\bar{Z}^{\hat{z}}_{\hat{x},0},$ and replacing
\begin{align}
\bar{Z}^{\hat{x}}_{\hat{z},\ell} &\mapsto \bar{Z}^{\hat{x}}_{\hat{z},0}\bar{Z}^{\hat{x}}_{\hat{z},\ell}
\, ,
&&
\bar{Z}^{\hat{y}}_{\hat{z},\ell} \mapsto 
\bar{Z}^{\hat{y}}_{\hat{z},0}\bar{Z}^{\hat{y}}_{\hat{z},\ell}
\, ,
\end{align}
where $\ell>0$, and
\begin{align}
\bar{X}^{\hat{x}}_{\hat{y},0} \mapsto 
\bar{X}^{\hat{x}}_{\hat{y},0}
\prod_{i=1}^{L_x}  \bar{X}^{\hat{y}}_{\hat{x},i}
    \, ,
\end{align}
we arrive at a complete set of $(2L_x+2L_y+2L_z-3)$ logical operator pairs. 
We remark that this basis has been chosen such that the $2(L_z-1)$ logical operator pairs with a $\hat{z}$ subscript are deformable in the $xy$ plane.

Under compactification along the $\hat{z}$ direction all logical operators with a $\hat{z}$ superscript become local. 
Hence $(L_x+L_y-1)$ logical $\bar{Z}$ operators become local, while their anti-commuting $\bar{X}$ partners are mapped to rigid subsystem symmetries. Similarly $(L_x+L_y)$ logical $\bar{X}$ operators become local, while their anti-commuting logical $\bar{Z}$ partners become linear subsystem symmetries. 
After fixing out the extensive, local degeneracy due to subsystem symmetry breaking, by fixing eigenvalues $\{s_i\}$ of the logical operators with a $\hat{z}$ superscript,  we are left with $(2L_z-2)$ logical operator pairs that are deformable in 2D. These correspond to the string operators of $(L_z-1)$ copies of Toric code wrapping the torus. Hence, each symmetry breaking sector of the compactified X-cube model is local-unitary equivalent to $(L_z-1)$ copies of toric code, where $L_z$ is the compactification radius. That is 
\begin{align}
n_{L_z}^{\{s_i\}} = \text{TC}_{2\text{D}}^{\otimes (L_z -1)} \, .
\end{align}

We have verified these facts numerically using the anti-commuting string operator calculation explained in section~\ref{compactification}. 

In this case the translations $\bar{T}_x,\bar{T}_y,$ act trivially on the copies of toric code, but nontrivially on the symmetry breaking sectors. 
On the other hand, $\bar{T}_z$ acts nontrivially on both the copies of toric code and the symmetry breaking sectors. 

The results of this compactification provide several signatures that differentiate the X-cube model from a decoupled stack of 2D toric code layers. The clearest signature is in the constant correction $-1$ to the number of toric codes occurring as $L_z$ is increased. This is consistent with the fact that one can disentangle a copy of 2D toric code from two layers of the X-cube model via a local unitary~\cite{shirley2018Fractional}. 
From the 2D toric code example in section~\ref{sec:tqft} we also see that the pattern of subsystem symmetry breaking in compactified X-cube is subtly different to that occurring in compactified layers of 2D toric code. 
It would be interesting to use similar signatures from compactification to identify and classify nontrivial type-I models in the future.

\section{ Compactification of type-II fracton models}
\label{sec:type2}

In this section we turn our attention to type-II fracton models, focusing on the canonical example: the cubic code. 
By their definition~\cite{vijay2016fracton}, type-II fracton models do not have any string operators, and thus \emph{all} nontrivial particle excitations are immobile. In such models, isolated excitations can be created at the corners of fractal operators. For example, in the cubic code~\cite{haah2011local}, a fractal operator exists with support on a Sierpinski tetrahedron.

Since there are no string operators in a type-II model, no local logical operators can occur after compactification. Hence the resulting 2D models will be topologically ordered. 
The compactification of a translation invariant type-II stabilizer model in the $\hat{z}$ direction is then simply characterized by the number of toric codes it is equivalent to as a function of length, $n_{L_z}$.

The quantity $n_{L_z}$ has to be calculated on a model-to-model basis, but it does satisfy a simple bound that depends upon the locality of the stabilizer generators. 
Consider compactifying a type-II stabilizer model with generators that act on vertices adjacent to individual cubes, one can show that ${n_{L_z} \leq 2 L_z}$. 
This is achieved by considering the topological entanglement entropy (TEE) computed via the Kitaev-Preskill prescription of tripartite information $I(A,B,C),$ given by 
\begin{align}
    S_A + S_B + S_C - S_{AB} - S_{BC} - S_{AC} + S_{ABC} \, , \nonumber
\end{align}
on the regions shown in Fig.\ref{fig:ABC}. 
More generally for any 2D translation invariant topological stabilizer model, with $Q$ qubits per site and $Q$ independent generators per plaquette which only act on the sites at the corners, we have $n_{L_z} \leq Q$. 
In fact, when there are no relations amongst the generators, and the plaquette stabilizers they generate act nontrivially on all corners, we find saturation $Q=-I(A,B,C)$. 
One would expect this to determine $n_{L_z}$, as it was argued in \R{KitaevPreskill} that $n_{L_z}=-I(A,B,C)$, however it has recently been realized~\cite{bravyiunpublished,PhysRevB.94.075151,Williamson2018} that spurious contributions to the TEE may cause $n_{L_z} \leq -I(A,B,C)$. 
We find that this behaviour occurs for the cubic code example below. 
For this reason we use the more careful approach described in section~\ref{compactification} to calculate $n_{L_z}$ accurately. 

\begin{figure}
\centering
\includegraphics[scale=1.25]{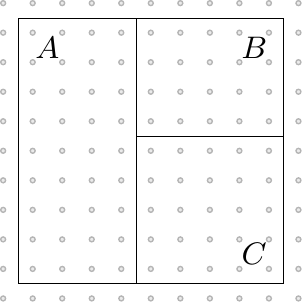}
\caption{Regions used to calculate the topological entanglement entropy.}
\label{fig:ABC}
\end{figure}

The anyons in a compactified type-II model descend from immobile fracton excitations in 3D, raising the question: how does the fractonic immobility manifest in the 2D compactified model? 
We find that the immobility is reflected in the 2D model as a nontrivial enrichment of the topological order by on-site and translation symmetries. 
These symmetries descend from the translation symmetries of the 3D model in the $\hat{z}$, and $\hat{x},\hat{y}$ directions, respectively. 
The immobility of fractons in 3D implies that translates of a nontrivial excitation are topologically distinct and therefore a pair of such excitations cannot be created by a string operator. In other words, the action of translation symmetry must change the topological superselection sectors of the fracton excitations. Descending to 2D, via compactification, this implies that translations permute the anyon types. 
In contrast to the 3D type-II model, the action of translation on its 2D compactification must have a finite order. This can be understood as originating from some operators in 3D that wrap the compactified direction and pairwise create a nontrivial topological excitation and some translate of it. Such an operator would not exist purely in the 3D bulk without the compactified boundary condition. For example, let $O_x$ denote the order of $\bar{T}_x$ in the compactified model, then some string operator in 2D can create any excitation along with its translation by $T_x^{O_x}$. 
We remark that for CSS codes, the action of translation is restricted to permuting excitations of the $X$ stabilizers amongst themselves, and similarly for the $Z$ stabilizers. 

\subsection{Cubic code}

The main example we consider is the cubic code which is known to be a type-II stabilizer model~\cite{haah2011local}, i.e. it has no string operators and all nontrivial excitations are immobile. 
The generators of the stabilizer group are given by 
\begin{align}
\begin{array}{c}
\drawgenerator{XI}{II}{IX}{XI}{IX}{XX}{XI}{IX}
\quad
\drawgenerator{ZI}{ZZ}{IZ}{ZI}{IZ}{II}{ZI}{IZ}
\end{array}
\end{align}
and their translations. 
See Fig.~\ref{excitation_patterns} for the patterns of excitations that are created on the dual cubic lattice by local operators. 
The remainder of this section focuses on the compactification of the above model.

\begin{center}
\begin{table}
\vspace{4mm}
\begin{tabular}{|c|c|c|c|}
\hline 
$L_{z}$ & $n_{L_z}= R / 2 $ & $O_{x}=O_{y}$ & $O_{z}$ 
\tabularnewline
\hline 
\hline 
1 & 2 & 2 & 1  \tabularnewline
\hline 
2 & 4  & 4 & 2 \tabularnewline
\hline 
3 & 2  & 2 & 1 \tabularnewline
\hline 
4 & 8  & 8 & 4 \tabularnewline
\hline 
5 & 10  & 30 & 5 \tabularnewline
\hline 
6 & 4 & 4 & 2  \tabularnewline
\hline 
7 & 14 & 126 & 7  \tabularnewline
\hline 
8 & 16 & 16 & 8  \tabularnewline
\hline 
9 & 14 & 126 & 9 \tabularnewline
\hline 
10 & 20  & 60 & 10 \tabularnewline
\hline 
11 & 22 & 682 & 11 \tabularnewline
\hline 
12 & 8 & 8 & 4 \tabularnewline
\hline 
13 & 26 & 2730 & 13 \tabularnewline
\hline 
14 & 28 & 252 & 14 \tabularnewline
\hline 
15 & 26 & 30 &15  \tabularnewline
\hline 
16 & 32 & 32 &16  \tabularnewline
\hline 
17  & 34 & 510 & 17\tabularnewline
\hline 
18 & 28 & 252 & 18 \tabularnewline
\hline 
19 & 38 & 19418 &19  \tabularnewline
\hline 
20 & 40 & 120 &20  \tabularnewline
\hline 
21 & 38 & 126 & 21 \tabularnewline
\hline 
22 & 44 & 1364 & 22 \tabularnewline
\hline 
\end{tabular}
\caption{As a function of compactification radius $L_z$ we list: $R/2$ half the rank of the commutation matrix $C$,  which equals the number of copies of toric code $n_{L_z}$, for the compactified cubic code. The order  $O_x=O_y$ of the translation action $\bar{T}_x,\bar{T}_y$. The order $O_z$ of the on-site symmetry $\bar{T}_z$.}
\label{table_copies}
\end{table}
\par\end{center}

\subsubsection{Compactified Topological Order}
We consider a family of 2D models produced by compactifying the cubic code, following the procedure in section~\ref{compactification}. 
By the structure theorem presented there, the compactified cubic code models are unitarily equivalent to some number $n_{L_z}$ copies of toric code. 
We have numerically calculated $n_{L_z}$ for a range of values $L_z$ by finding pairs of anti-commuting string operators as explained in section~\ref{compactification}. These results are presented in table~\ref{table_copies}. 

The numerical scaling of $n_{L_z}$ fits a simple formula for $L_z=2^i \ell$, where $2 \not | \, \ell$, given by 
\begin{align}
\label{eq:ntc}
   \boxed{ 
   n_{L_z} = \begin{cases}
    2 L_z & \text{if } 3 \not | L_z
    \, ,
    \\
    2 L_z - 2^{i+2} & \text{if } 3 \ | L_z 
    \, .
    \end{cases}
    }
\end{align}
In the next subsection we prove that the above formula in fact holds for all $L_z$ explicitly, using the framework of polynomial rings~\cite{haah2013commuting}.  
This scaling is substantially simpler than the scaling of the ground space degeneracy of the cubic code on an $L \times L \times L$ 3D torus~\cite{Haah2013}. 
One may then ask: How is it possible that the ground space degeneracy of $n_{L_z}$ copies of toric code on an $L_x \times L_y$ torus matches the ground space degeneracy of the cubic code on an ${L_x \times L_y \times L_z}$ 3D torus? 
The answer lies in the nontrivial action of translation symmetry on the $n_{L_z}$ toric codes contained in the compactified cubic code. Further details are given in the following subsections.

To build intuition for the scaling of $n_{L_z}$ we first notice that an adjacent pair of plaquettes, each having $2L_z$ stabilizers, can support $2^{4 L_z}$ excitations, including vacuum. If these excitations are all topologically distinct, i.e. there are no nontrivial string operators confined to the pair of plaquettes, then the pair of plaquettes can support all the charge sectors of $2 L_z$ copies of toric code. If we suppose that the pair of plaquettes still support all sectors when they also support nontrivial string operators, we can find the topologically inequivalent sectors by modding out the relations introduced by those string operators. 

The simplest case with $n_{L_z} \leq 2 L_z$ occurs when $n_3=2$, which was studied in detail in \R{Williamson2018}. 
To provide an explanation we direct the reader's attention to Fig.~\ref{excitation_patterns}~(e) which shows an excitation pattern created by a local operator. For compactified boundary conditions with $L_z=3$ the operator in Fig.~\ref{excitation_patterns}~(e) allows an excitation to hop between stabilizers related by a $\hat{z}$ translation. Modding out by these relations we find $2^{4}$ inequivalent sectors on the pair of plaquettes. 
Furthermore, combining the operators in Fig.~\ref{excitation_patterns}~(e) with (a) and then (b) we find diagonal hopping operators for each type of excitation on any plaquette. Hence we find a model equivalent to two copies of toric code with $\bar{T}_x,\bar{T}_y$ acting as layer swap and $\bar{T}_z$ acting trivially. This agrees with the numerical results. 

More generally, whenever $3 | L_z$ one can find string operators on a column by taking products of operators similar to that shown in Fig.~\ref{excitation_patterns}~(e). 
For ${2 \not|  L_z}$  there are again two independent string operators on each plaquette for each type of stabilizer. 
Due to the bifurcating real-space renormalization group flow of the cubic code demonstrated in Ref.~\cite{haah2014bifurcation} the number of toric codes $n_{L_z}$ doubles as we double $L_z$, i.e. ${n_{2 L_z} = 2 n_{L_z}}$, which accounts for the value of $n_{L_z}$ in the remaining cases of Eq.~\eqref{eq:ntc} in which $2 | L_z$.

\begin{figure}
\centering
\sidesubfloat[]{\includegraphics[scale=0.32]{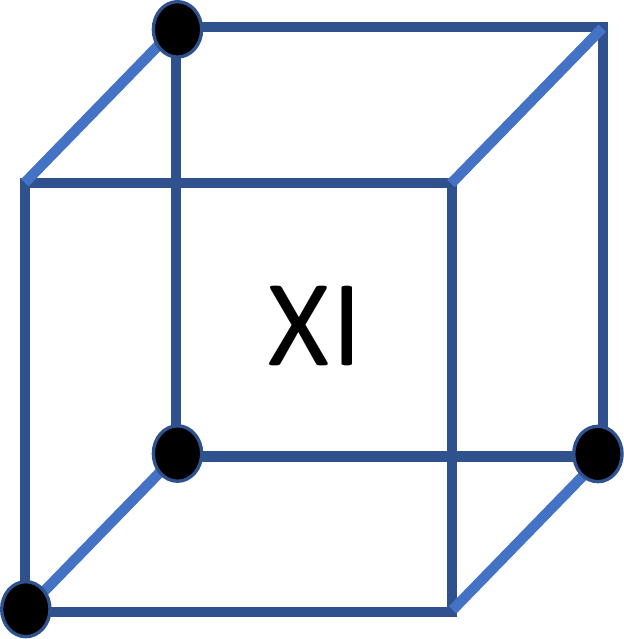}} 
\sidesubfloat[]{\includegraphics[scale=0.32]{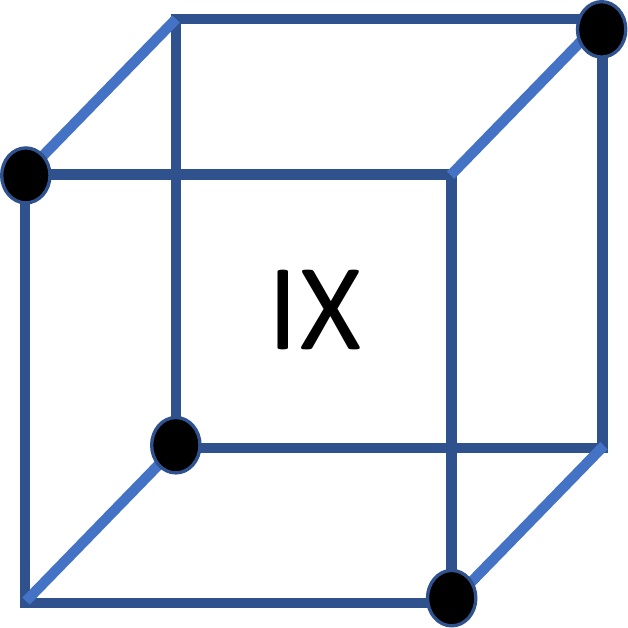}}
\sidesubfloat[]{\includegraphics[scale=0.32]{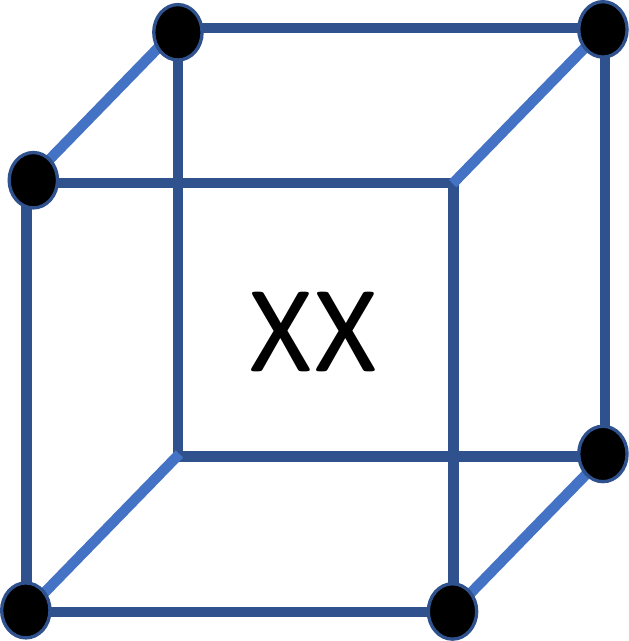}}\\
\sidesubfloat[]{\includegraphics[scale=0.32]{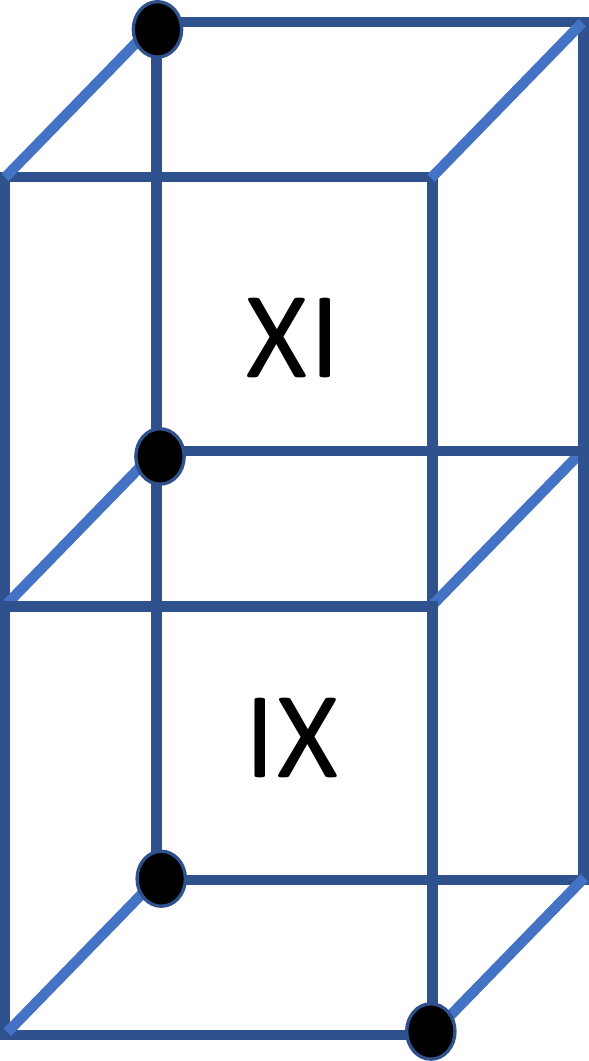}}\hspace{0.1mm}
\sidesubfloat[]{\includegraphics[scale=0.32]{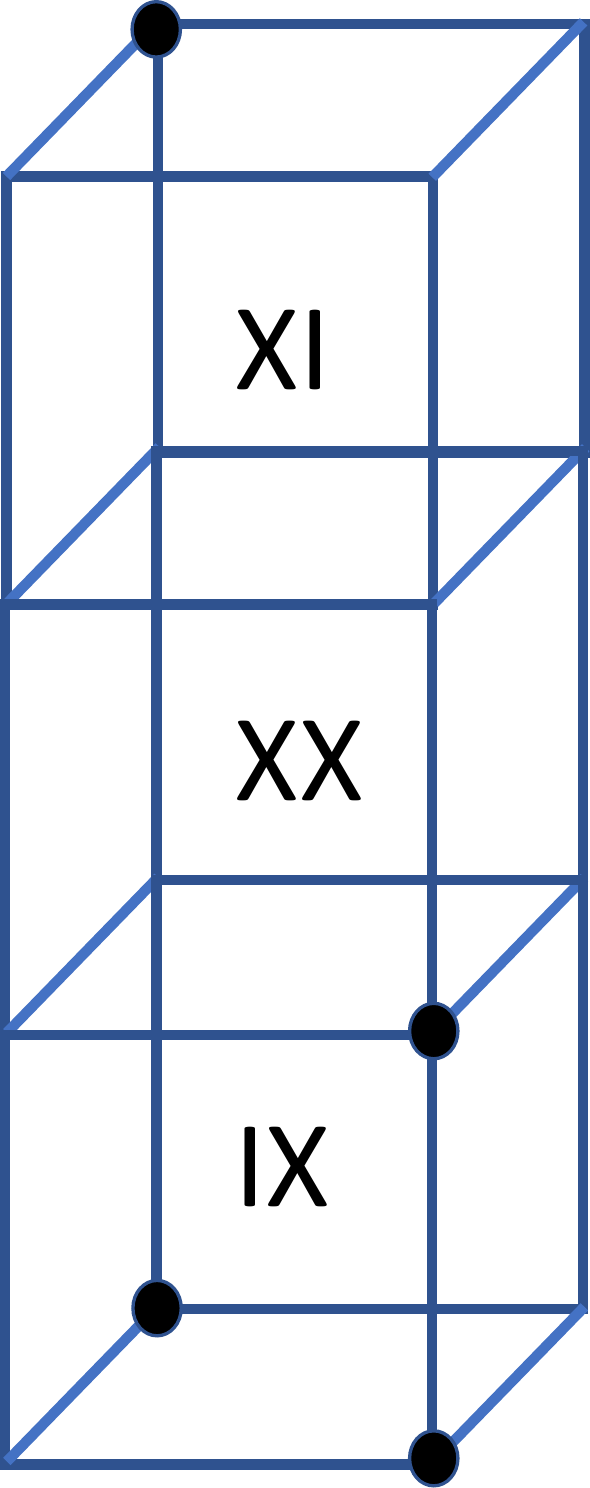}}\\ 
\caption{Excitation patterns of stabilizers located on the vertices of the dual cubic lattice, which are created by local operators on pairs of qubits inside each cube: (a) XI, (b) IX, (c) XX, (d) IX-XI along $\hat{z}$, (e) IX-XX-XI along $\hat{z}$.}
\label{excitation_patterns}
\vspace{-0.7ex} 
\end{figure}

\subsubsection{Copies of Toric Code from the polynomial framework}
In this subsection we compute the number of copies of the 2D toric codes in the cubic code compactified in $z$-direction
by a periodic boundary condition of linear size $L_z$. 
We employ the framework of polynomial rings developed in \R{haah2013commuting}, for the sake of brevity we do not introduce the formalism here. 
The no-strings rule of the cubic code implies that the compactified code is exact.
By the classification theorem of exact CSS codes,
the number $n_{L_z}$ of embedded toric codes is half the dimension of 
the torsion submodule of $\coker \epsilon$~\cite[V.13]{haah2016algebraic}.
Therefore, our starting point is the formula
\begin{align}
&n_{L_z} = \dim_\FFtwo \frac{\FFtwo[x^\pm,y^\pm,z^\pm]}{( 1+x+y+z,1+xy+yz+zx, z^L_z + 1 )} .
\end{align}
As always, we lift the coefficient field to its algebraic closure $\FF \supset \FFtwo$.
The ring in this equation has a finite $\FF$-dimension.
So, it is isomorphic to the finite direct sum of localizations at maximal ideals 
$(x - a-1, y - b-1, z - t -1)$ where $a,b,t \in \FF \setminus \{ 1 \}$ are arbitrary;
the constant $1$ is inserted for convenience in notation later.

Put $L = 2^i \ell$ where $\ell$ is odd.
In order for the localized ring not to vanish,
the constants $a,b,t$ has to be chosen to ``satisfy'' the ideal.
In particular, $a+b + t = 0$, $ab + bt+ ta = 0$ and $(1+t)^\ell + 1 = 0$.
So, $a = \omega t$ and $b = \omega^2 t$ 
where $\omega$ is either one of the two third roots of unity.
Since $a,b,t \neq 1$, the constant $t$ is further restricted so that $t \neq 1,\omega, \omega^2$.
Put $\mm = (u - a, v- b)$ where $a \neq 1$, $b \neq 1$ and $a+b = t \neq 1$.
We can make substitutions $x \to u +1, y \to v + 1, z \to 1+x+y=1+u+v$
so that 
\[
n_{L_z} = \sum_\mm \dim_\FF \FF[u,v]_\mm / \mathfrak J_\mm
\]
where
\begin{align}
\mathfrak J_\mm 
&= ( u^2 + uv + v^2, (1+u+v)^L +1 )_\mm \nonumber\\
&= ( u^2 + uv + v^2, u^{2^i} + v^{2^i} + t^{2^i} )_\mm \\
&= ( u^2 + uv + v^2, uv^{2^i -1} + (n+1)v^{2^i} + t^{2^i} )_\mm. \nonumber
\end{align}
The second line follows from the first because the equation $z^{L_z} +1$ factorizes into linear factors
raised to power $2^i$, but the localization at $\mm$ singles out only one factor.
The third line uses $u^{2^i} = uv ^{2^i-1} + i v^{2^i} \mod u^2 + uv + v^2$
for any integer $ i \ge 0$.

Suppose $t \neq 0$. Then, $\mathfrak J_\mm$ is equal to
$(u - \omega^2 v, (\omega^2 + 1 + i) v^{2^i} + t^{2^i} )_\mm$,
so such $\mm$ contributes to the sum by $2^i$.
If $\ell$ is not a multiple of $3$, then there are $2(\ell - 1)$ such $\mm$.
If $\ell$ is a multiple of $3$, then there are $2(\ell - 3)$ such $\mm$
because we have to exclude the third roots of unity.

Suppose $t = 0$. Then, $a = b = 0$, i.e., $\mm = (u,v)$.
Then,
\[
\mathfrak J_\mm = ( u^2 + uv + v^2, uv^{2^i-1} + (i+1)v^{2^i} )_{(u,v)}.
\]
Regardless of whether $i$ is even or odd,
one can show that $\{ u^2 + uv + v^2, uv^{2^i -1} + (i+1)v^{2^i}, v^{2^i +1} \}$
is a Gr\"obner basis for $\mathfrak J$ in the degree term order.
We see that $(2^i + 1) + (2^i -1) = 2^{i+1}$ contributes to the sum.

Therefore,
\begin{align}
n_{L_z} = 
\begin{cases}
2^i \cdot 2(\ell -1) + 2^{i+1} = 2L_z & \text{ if } 3 \not| L_z , \\
2^i \cdot 2(\ell -3) + 2^{i+1} = 2L_z - 2^{i+2} & \text{ if } 3\ | L_z.
\end{cases}
\end{align}

\subsubsection{Translation symmetry-enrichment}
As eluded to above, the immobility of the topological sectors in the cubic code is reflected in the symmetry-enriched topological order of the compactified models. The relevant symmetries are the translations: $T_x,T_y,$ and $T_z$ which are mapped to translation and on-site symmetries under compactification, respectively. 
The nontrivial action of the translation symmetries on the compactified models leads to symmetry-twisted boundary conditions that remarkably allow the ground space degeneracy of $n_{L_z}$ copies of toric code on an $L_x\times L_y$ torus to match the ground space degeneracy of the cubic code on an $L_x\times L_y \times L_z $ 3D torus. 

The action of $T_x,T_y,$ and $T_z$ on the anyons in compactified cubic code takes the form of a permutation that does not mix $e$ excitations of the $X$ stabilizers with $m$ excitations of the $Z$ stabilizers. 
In fact, due to symmetries of the cubic code Hamiltonian we find that the action of the symmetries simplifies further. The symmetry of reflection across the $\hat{x}\pm \hat{y}$ planes implies ${T}_x$ and ${T}_y$ act on the anyons in an isomorphic way. The further symmetry given by a combined spatial inversion, on-site swap, and Hadamard, implies that the action of any ${T}_i$ on the $e$ sector is isomorphic to the same $T_i$ symmetry acting on the $m$ sector. Hence we focus our attention on the action of ${T}_x$ and ${T}_z$ on the $e$ sector. These actions are specified by $n_{L_z} \times n_{L_z}$ binary matrices $\bar{T}_x$ and $\bar{T}_z$ that describe the action of $T_x$ and $T_z$ on the basis of $e$ string operators found using the approach covered in section~\ref{compactification}. 

While we have explicitly calculated the matrices $\bar{T}_x$ and $\bar{T}_z$ for $L_z\leq 22$, they are only unique up to a change of basis for the anyons and are not enlightening when presented in raw form. For this reason we focus on an important gauge invariant quantity of the symmetry-enriched topological phase: the orders $O_x=O_y$ and $O_z$ of the translation actions within the automorphism group of the anyons, see section~\ref{compactification}. 
The values of $O_i$ are tabulated in table~\ref{table_copies}.

The orders of permutation in the $\hat{x}$ and $\hat{y}$ directions have a clear physical interpretation: $O_x \times O_y$ is the minimal  coarse-grained unit cell for which translation symmetries $T_x^{O_x},T_y^{O_y},$ act trivially on the anyons. In other words, on a torus with ${O_x \not| L_x}$ or ${O_y \not| L_y}$ the boundary conditions are twisted by a nontrivial translation action. For $n_{L_z}$ copies of toric code on an $L_x \times L_y$ torus with twisted boundary conditions, i.e. at least one of ${L_x \mod O_x}$ or ${L_y \mod O_y}$ are nonzero, the ground state degeneracy is reduced to $< 4^{n_{L_z}}$. In fact the ground space degeneracy is known to be given by the number of ${T}_x^{L_x}$-symmetric ${T}_y^{L_y}$ defects, or vice-versa~\cite{barkeshli2014symmetry}. 
Hence, the complicated behavior of the cubic code's ground space degeneracy on a 3D torus can be understood via the symmetry-enriched topological order of the compactified model. 

Observe in table~\ref{table_copies} that the order of translation $O_x$ is doubled as the compactification radius $L_z$ is doubled. 
This, in fact, holds in general which we will proceed to prove. 
To achieve this we make use of the following formula for $\log_{2}$ of the ground space degeneracy of the cubic code on an $L_1 \times L_2 \times L_3$ torus~\cite{Haah2013}
\begin{equation}
\label{eq:ccdegen}
d=\begin{cases}
2^{l_{1}+1}d_{1}-1 & \text{if }l_{1}=l_{2}\, ,\\
2^{l_{1}+1}d_{1} & \text{otherwise.}
\end{cases}
\end{equation}
In the above, without loss of generality,    $l_1\leq l_2 \leq l_3$ indicate the number of factors of 2 in the prime factorizations of $L_1,L_2,L_3,$ and 
\begin{align}
    d_1 = \deg_x {\gcd} \left( (1+ x)^{\ell_1}+1,\right. (&1+ \omega x)^{\ell_2}+1, 
     \nonumber 
    \\
    &\left. (1+ \omega^2 x)^{\ell_3}+1
    \right)_{\mathbb{F}} ,
\end{align}
where $\ell_i=2^{-l_i}L_i$, $1+\omega+\omega^2=0$ and $\mathbb{F}$ is the algebraic closure of $\mathbb{F}_2$. 
In particular, if $L_1=L_2=L_3$ then $l_1=l_2$. 

\begin{lemma}
\label{lem:1}
For any value $L_z = 2^{l_z} \ell_z,$ consider $O:=O_x=O_y=2^o O'$, where ${2 \not | \, \ell_z,  O'}$. Then, $o-l_z=1$. 
\end{lemma}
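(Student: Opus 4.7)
The plan is to compute the order of $\bar{T}_x$ directly in the polynomial framework of the preceding subsection. Translation $T_x$ acts on $\coker\epsilon$ by multiplication by $x = 1+u$, and $\coker\epsilon$ decomposes as a direct sum over the contributing maximal ideals $\mm$. The order $O$ is therefore the LCM over $\mm$ of the local multiplicative order of $(1+u)$ in $\FF[u,v]_\mm/\mathfrak{J}_\mm$; hence $v_2(O)$ is the maximum of the local $2$-adic valuations, and it suffices to show this maximum equals $l_z+1$.

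At a generic ideal $\mm = (u-a, v-b)$ with $t = a+b \neq 0$, the paper's reduction gives the local ring $\FF[v']/(v')^{2^{l_z}}$ with $v' = v - b$, after eliminating $u = \omega^2 v$. Writing $1+u = c(1+\alpha v')$ with $c = 1+a$ and $\alpha$ a unit, we have $c \neq 0$ because $a \neq 1$ by the paper's setup; moreover $c$ has odd multiplicative order in $\FF^\times$, since every $\FF_{2^k}^\times$ has odd cardinality $2^k - 1$. In characteristic $2$, $(1+\alpha v')^{2^k} = 1+\alpha^{2^k}(v')^{2^k}$, so the nilpotency degree $2^{l_z}$ of $v'$ forces the unipotent factor to have order exactly $2^{l_z}$. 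Thus every generic $\mm$ contributes local $2$-adic valuation $l_z$.

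At the special ideal $\mm = (u,v)$ (corresponding to $t=0$), one uses the Gr\"obner basis $\{u^2+uv+v^2,\, uv^{2^{l_z}-1} + (l_z+1)v^{2^{l_z}},\, v^{2^{l_z}+1}\}$ established in the previous subsection. Starting from $u^{2^0} = u$ and squaring, the relation $u^2 = uv + v^2$ gives inductively
\[
 u^{2^k} \;=\; u v^{2^k - 1} \;+\; (k\bmod 2)\, v^{2^k}.
\]
At $k = l_z$, substituting the Gr\"obner relation $uv^{2^{l_z}-1} = (l_z+1) v^{2^{l_z}}$ yields $u^{2^{l_z}} = [(l_z+1) + (l_z\bmod 2)]\, v^{2^{l_z}} = v^{2^{l_z}}$, which is a nonzero standard monomial (the bracketed coefficient always reduces to $1$ in $\FFtwo$ by parity of $l_z$). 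Squaring once more gives $u^{2^{l_z+1}} = v^{2^{l_z+1}} = 0$ via $v^{2^{l_z}+1} = 0$. Since $(1+u)^{2^k} = 1 + u^{2^k}$ in characteristic $2$, the multiplicative order of $1+u$ at this ideal is exactly $2^{l_z+1}$.

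Taking the maximum over all $\mm$ gives $v_2(O) = l_z + 1$, i.e., $o = l_z + 1$, as claimed. The main technical step is the Frobenius computation at the special ideal, and the key subtlety is ensuring $u^{2^{l_z}}$ is genuinely nonzero rather than accidentally vanishing; this rests on the coefficient identity $(l_z+1)+(l_z\bmod 2) \equiv 1 \pmod 2$, which one verifies by parity of $l_z$. Everything else is routine.
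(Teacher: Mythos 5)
Your proof is correct, but it takes a genuinely different route from the paper's. The paper establishes $o=l_z+1$ by contradiction from the closed-form ground-space degeneracy of the cubic code on a 3-torus, Eq.~\eqref{eq:ccdegen}, together with the physical requirement that the degeneracy is maximal exactly when $O\mid L_x,L_y$ and strictly reduced under a nontrivial twist; the case structure of that formula then forces $o-l_z=1$. You instead compute the order of multiplication by $x=1+u$ directly in the charge module, reusing the decomposition over maximal ideals already set up for the $n_{L_z}$ computation: since the quotient is Artinian, $O_x$ is the lcm of the local orders, so $v_2(O_x)$ is the maximum of the local $2$-adic valuations. At the generic ideals the local ring is $\FF[v']/\bigl((v')^{2^{l_z}}\bigr)$, whose unit group is $\FF^\times\times(1+\mm)$ with $2$-part of exponent $2^{l_z}$ (and the scalar factor $1+a\neq 0$ has odd order), while at the special ideal $(u,v)$ the Frobenius iteration $u^{2^k}=uv^{2^k-1}+(k\bmod 2)v^{2^k}$ combined with the Gr\"obner relations gives $u^{2^{l_z}}=v^{2^{l_z}}\neq 0$ and $u^{2^{l_z+1}}=0$, hence order exactly $2^{l_z+1}$ there. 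The individual steps all check out: the coefficient identity $(l_z+1)+l_z\equiv 1 \pmod 2$, the nonvanishing of the standard monomial $v^{2^{l_z}}$, and the fact that the special ideal is always present because $z=1$ is always a root of $z^{L_z}-1$; I verified the result against the tabulated $O_x$ for $L_z\le 12$. Your approach buys self-containedness within the polynomial framework and pinpoints the source of the extra factor of $2$ as the doubled multiplicity at the point $(x,y,z)=(1,1,1)$ --- the same multiplicity the paper's Proposition invokes when bounding $O_x\le 2^{i+1}(2^k-1)$ --- whereas the paper's argument is shorter but imports the degeneracy formula from \R{Haah2013} and leans on the physics of twisted boundary conditions.
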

\begin{proof}
The proof is by contradiction: First let us assume ${o=l_z}$. If we change the length of the  periodic 2D lattice directions from $L_x=L_y=O$, which implies $l_x=l_y=l_z=o$,  to ${L_x}^\prime ={L_y}^\prime = 2^{i}O$ for any natural number $i$, which implies $l_z=o<l_x^\prime=l_y^\prime =o+i$, the degeneracy must change from $2^{o+1} d_1-1$ to $2^{o+1} d_1$.
Since the value of $d_1$ doesn't depend on $l_x,l_y,l_z,$ the degeneracy has decreased upon multiplying $L_x,L_y$ by $2^i$. 
However, the degeneracy should be maximal, and constant, for all untwisted boundary conditions $L_x=L_y=n O$ with $n$ a natural number. Hence ${o\not =l_z}$. 

Next, let us assume $o-l_z < 0$. Then similarly going from $L_x=L_y=O$ with $l_x=l_y=o<l_z$ to ${L_x}^\prime ={L_y}^\prime = {2^{l_z-o+1}} O$ with  $l_z<l_x^\prime =l_y^\prime=l_z+1$ changes the degeneracy from $2^{o+1} d_1-1$ to $2^{l_z+1}d_1$. However, for $L_x'$ and $L_y'$ being any positive integer multiples of the order $O$, the ground state degeneracy should not change. Thus, $o-l_z< 0$ is not consistent. 

Finally, let us assume $o-l_z\geq 2$. If we change lengths from $L_x=L_y=O$ to ${L_x}^\prime ={L_y}^\prime = {2^{-i}}{O}$, for $0 < i < o - l_z$ we go from $l_z<l_x=l_y=o$ to $l_z <l_x^\prime =l_y^\prime=o-i$, while for $i\geq o-l_z$ we go to  $l_x^\prime =l_y^\prime =o-i \leq l_z$ . Thus, for $0<i<o-{l_z}$, the degeneracy doesn't change. 
This is not consistent with $O$ being the order of a nontrivial anyon permuting translation action.
To see this note that $L_x',L_y'<O$ corresponds to twisted boundary conditions that must lead to a strictly smaller ground space degeneracy than the maximal value for $L_x=L_y=O$.

The only consistent possibility remaining is ${o-l_z=1}$. 
\end{proof}
Next, we argue that the order of the translation action doubles upon doubling the compactification radius, $O(2L_z)=2 O(L_z)$. 
Consider compactifying the cubic code with radius $L_z$, the smallest torus corresponding to untwisted boundary conditions is given by $L_x=L_y=O(L_z)$ which leads to a degeneracy $4^{n_{L_z}}$. 
Doubling the compactification radius to $2 L_z$, we have shown above that $n_{2 L_z} = 2 n_{L_z}$. Then periodic boundary conditions $L_x=L_y=2 O(L_z)$ lead to the maximal possible degeneracy $4^{2 n_{L_z}}$ and hence correspond to untwisted boundary conditions. Therefore $2 O(L_z) = k O( 2 L_z)$ for a natural number $k$, and the above lemma implies ${ 2 \not | k} $. 
Furthermore, we find that the ground space degeneracy for compactification radius $L_z$ and $L_x=L_y=O(L_z) / k$ is again maximal, $4^{n_{L_z}}$.
This implies $k=1$, or else there would be a contradiction with $O(L_z)$ being the order of a nontrivial translation action.

\begin{claim}
Suppose $L_z = 2^i \ell$ with $\ell$ odd.
Then, we have $O_x = O_y \le 2^{i+1}(2^k - 1)$
where $k$ is the least common multiple of $2$ and the multiplicative order of $2$ modulo $L_z$. In particular,
\begin{align}
O_x &\le 2^{i+1}( 2^{\ell-1} - 1 ),\\
O_x &= 2^{i+1}\ell &\text{ if } \ell = 2^n - 1 \text{ for $n$ even}, \\
O_x &\le 2^{i+1} \ell(\ell + 2) &\text{ if } \ell = 2^n -1 \text{ for $n$ odd}.
\end{align}
We also have $O_z = L_z$ if $\ell \neq 3$ and $O_z = 2^i$ if $\ell = 3$.
\end{claim}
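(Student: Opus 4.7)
The plan is to work throughout in the polynomial-ring framework of Ref.~\cite{haah2013commuting}, building on the block decomposition established in the previous subsection. After the change of variables $x \to u+1$, $y \to v+1$, $z \to 1+u+v$, the anyon module splits as a direct sum of localizations at maximal ideals $\mathfrak m = (u-a, v-b)$ with $a = \omega t$, $b = \omega^2 t$ and $(1+t)^\ell = 1$, together with the $t=0$ block of $\mathbb F$-dimension $2^{i+1}$, while every $t\neq 0$ block has dimension $2^i$. The translation actions $\bar T_x$ and $\bar T_z$ are implemented by multiplication by the units $x = 1+u$ and $z = 1+u+v$, so $O_x$ and $O_z$ are the least common multiples over all blocks of the orders in each block. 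In each block I split the order of multiplication by a unit $\lambda + \eta$ (with $\lambda \in \mathbb F^\ast$ and $\eta$ nilpotent) as the lcm of a semisimple part (the order of $\lambda$) and a unipotent part (the order of $1 + \eta/\lambda$).

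For $\bar T_x$, the residue of $x$ is $a+1 = 1+\omega t$. Since $1+t \in \mu_\ell \subseteq \mathbb F_{2^{\operatorname{ord}_\ell 2}}$ and $\omega \in \mathbb F_4$, we have $a+1 \in \mathbb F_{2^k}^\ast$ with $k = \operatorname{lcm}(2,\operatorname{ord}_\ell 2)$, so its order divides $2^k-1$. For the unipotent part, any block of $\mathbb F$-dimension $d$ has every nilpotent $\eta$ satisfying $\eta^d = 0$, and Frobenius in characteristic two, $(1+\eta)^{2^j} = 1+\eta^{2^j}$, gives unipotent order a power of two at most the smallest $2^j \ge d$; with $d \le 2^{i+1}$ this is bounded by $2^{i+1}$. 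Because $2^k-1$ is odd, the two factors are coprime across blocks and combine into the master bound $O_x \le 2^{i+1}(2^k-1)$. The three specializations follow: Fermat gives $\operatorname{ord}_\ell 2 \mid \ell-1$ and evenness of $\ell-1$ for odd $\ell$ gives $k \mid \ell-1$; for Mersenne $\ell = 2^n-1$ we have $\operatorname{ord}_\ell 2 = n$, hence $k = n$ if $n$ is even (so $2^k-1 = \ell$) and $k = 2n$ if $n$ is odd (so $2^k-1 = \ell(\ell+2)$). Equality in the Mersenne $n$-even case (with $\ell \neq 3$) is obtained by combining a $t \neq 0$ block in which $a+1$ has order exactly $\ell$, chosen by taking $1+t$ a primitive $\ell$-th root of unity, with the $t=0$ block, whose unipotent order equals $2^{i+1}$ by an explicit Gr\"obner-basis computation of the nilpotency index of $u$; coprimality makes their lcm equal $2^{i+1}\ell$.

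For $\bar T_z$, the residue of $z$ is $1+a+b = 1+t$, itself an $\ell$-th root of unity in every $t\neq 0$ block. When $\ell \neq 3$ the constraint $t \notin \{0,1,\omega,\omega^2\}$ still permits $1+t$ to be a primitive $\ell$-th root of unity, contributing semisimple order $\ell$ from some block; combined with the unipotent factor $2^i$ from the $t=0$ block (computed below) coprimality gives $O_z \ge L_z$, while the a priori relation $z^{L_z} = 1$ forces equality. When $\ell = 3$, every third root of unity lies in $\{1,\omega,\omega^2\}$, so all $t \neq 0$ blocks are excluded and only the $t=0$ block contributes; there $z \equiv 1 \pmod{\mathfrak m}$, so the order of $1 + (u+v)$ is purely unipotent. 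An explicit calculation using $(u+v)^2 = uv \pmod{u^2+uv+v^2}$ and repeatedly reducing via $uv^{2^i - 1} + (i+1)v^{2^i} = 0$ and $v^{2^i+1} = 0$ shows $(u+v)^{2^i} = 0$ while $(u+v)^{2^{i-1}} \neq 0$, yielding order exactly $2^i$.

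The main obstacle is the lower-bound side: the upper bounds follow cleanly from the block decomposition and characteristic-two Frobenius, but saturating them requires exact control over the nilpotency indices in the $t=0$ block, specifically the index $2^i + 1$ of $u$ (giving unipotent order $2^{i+1}$ for $\bar T_x$) and the index $2^i$ of $u+v$ (giving unipotent order $2^i$ for $\bar T_z$ in the $\ell=3$ case). I would handle this by inductively computing $u^j$ and $(u+v)^{2^j}$ modulo the three Gr\"obner-basis relations above, reducing tightness to a finite combinatorial check that can be verified uniformly in $i$.
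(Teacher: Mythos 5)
Your proposal follows essentially the same route as the paper: localize the charge module at the maximal ideals lying on the two lines cut out by $1+x+y+z$ and $1+xy+yz+zx$, bound the multiplicative order of the residue of $x$ by $2^k-1$ via the splitting field $\FF_{2^k}$ containing $\mu_\ell$ and $\omega$, and bound the unipotent part by the nilpotency index in each block ($2^{i+1}$ for the $t=0$ block, $2^i$ otherwise); the paper packages the same content as radical membership plus multiplicity rather than a semisimple/unipotent splitting, but the two are equivalent. Your explicit caveat $\ell\neq 3$ in the Mersenne equality case is in fact necessary (for $\ell=3$ all $t\neq 0$ blocks are excluded and $O_x=2^{i+1}$, consistent with the table), so that is a point in your favor.

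One step is imprecise as written: in the equality case you choose $1+t$ to be a primitive $\ell$-th root of unity and conclude that $a+1=1+\omega t$ has order exactly $\ell$, but primitivity of the $z$-residue does not transfer to the $x$-residue under the affine map $t\mapsto 1+\omega t$. What you actually need, and what the paper uses, is that as $t$ ranges over the admissible points of the two lines the $x$-residues cover all of $\FF_{2^n}^\times$ (the map is an affine bijection of $\FF_{2^n}$ when $n$ is even, and the two lines together hit every nonzero value), so \emph{some} admissible block has $x$-residue of order exactly $2^n-1=\ell$; combined with the unipotent order $2^{i+1}$ of the $t=0$ block this gives the lower bound. A second, shared, imprecision: invoking Fermat to get $\operatorname{ord}_\ell 2\mid \ell-1$ is only valid for prime $\ell$ (e.g.\ $\ell=9$ gives $k=6\nmid 8$); however the displayed bound $O_x\le 2^{i+1}(2^{\ell-1}-1)$ only requires $k\le \ell-1$, which does hold since an odd $k'$ divides $\varphi(\ell)/1$ with $\varphi(\ell)$ even, so the conclusion survives. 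The remaining computations you defer (nilpotency index $2^i+1$ of $u$ and $2^i$ of $u+v$ in the $t=0$ block) reduce to the one-line identities $u^{2^i}=v^{2^i}\neq 0$ and $(u+v)^{2^i}=(uv)^{2^{i-1}}=0$ modulo the stated Gr\"obner basis, so they are not an obstacle.
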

\begin{proof}
For the above proposition, it suffices to assume $i=0$.
We use the polynomial method~\cite{Eisenbud} and Galois theory~\cite{Lang}.
Recall that the charge module for the compactified cubic code is 
$\FF_2[x^\pm,y^\pm,z]/J$ where $J = (1+x+y+z, 1+xy+yz+zx, z^\ell-1)$.
The translation group acts as the monomial multiplication.
As noted earlier, $O_x = O_y$ from the symmetry $x \leftrightarrow y$.
The order $O_x$ is the minimum positive integer such that
$x^{O_x} -1 \in J$.
We estimate the order $O_x$ by considering the zeros of the ideal $J$.
The first two generators of $J$ defines a variety of codimension~2,
and the compactification condition $z^L -1$ selects finitely many points
in this variety.
Specifically, the first two generators define two lines that are parametrized as
$(\omega^2+ \omega t, \omega + \omega^2 t, t)$ and 
$(\omega+ \omega^2 t, \omega^2+ \omega t, t)$
where $t$ is arbitrary, and $\omega \in \FF_4$ is a third root of unity
satisfying $\omega^2 + \omega + 1 = 0$.
The polynomial $z^\ell -1$ is separable (no degenerate roots),
and the variety will be rational over the minimal extension field $\FF_{2^k}$ over $\FF_2$
that contains all the $\ell$-th root of unity, which form a cyclic group $\ZZ/\ell\ZZ$,
and $\omega$.
Every nonzero element of $\FF_{2^k}$ is a root of $x^{2^k-1} -1$,
a power of which must belong to $J$.
Localizing at the points of the two lines we see that $(x^{2^k-1} -1)^2 \in J$.
(Localization at $(1,1,1)$ shows that $x=1$ has two-fold degeneracy.)
Thus, the order $O_x$ is at most $2\cdot (2^k - 1)$.
It remains to compute $k$ for inequalities.

Since any automorphism of $E$ over $\FF_2$ should send 
a generator $\zeta$ of the group of all $L$-th root of unity
to another generator, the splitting field $\FF_2(\zeta)$ of $z^\ell -1$
has at most $|(\ZZ / \ell\ZZ)^\times| = \varphi(\ell)$ automorphisms over $\FF_2$.
In fact, since the automorphism of a finite field 
of characteristic~2
is always a composition of Frobenius map $\gamma \mapsto \gamma^2$,
the group $\mathrm{Aut}(\FF_2(\zeta) / \FF_2)$
has order $k'$ that is equal to the multiplicative order of 2 modulo $\ell$.
By Artin's theorem we see the extension degree 
$[\FF_2(\zeta):\FF_2]$ equals $k'$,
which divides $\varphi(\ell) \le \ell-1$.
If $k'$ is even, then $\FF_2(\zeta)$ already contains $\omega$ 
(since $\omega$ has degree-two minimal polynomial),
and we have $k' = k \le \ell-1$.
If $k'$ is odd, then $\FF_2(\zeta)$ does not contain $\omega$
and we have $k = 2k'$.
Since $L-1$ is even, in any case we have $k | (\ell-1)$.
If $\ell+1 = 2^n$,
then the multiplicative order of $2$ modulo $\ell$ is $n$.
This proves all the inequalities.

Furthermore, if $\ell+1 = 2^n$ with $n$ even, 
then the $x$-coordinate of the two lines ranges over 
exactly all the nonzero elements of $\FF_{2^n}$,
we see $x^m - 1 \in \mathrm{rad} J$ 
if and only if $m$ is a multiple of $n$.
This proves the lower bound in this special case, hence the equality.

To compute $O_z$, we again look at the two lines.
Since we have to avoid $x=0$ or $y=0$ planes as the variables $x$ and $y$
are invertible, the parameter $t$ is not equal to any of $\omega,\omega,0$, but otherwise any value is allowed.

Localizing at maximal ideals $\mm = (x-a,y-b,z-c)$ where $(a,b,c) \neq (1,1,1)$,
we immediately see that $J_\mm$ contains $(z-c)^m$ for $m = 2^i$
but not for $0\le m < 2^i$,
The same is true for $\mm = (x-1,y-1,z-1)$ thanks to
the Gr\"obner basis computation that we have performed 
in the course of computing $n_{L_z}$ above.
The ``multiplicity'' of any $z$-coordinate is $2^i$.

If $3$ does not divide $\ell$,
then $0,\omega^\pm$ are not roots of $z^\ell - 1$,
so $z^m - 1 \in \mathrm{rad} J$ for $m = \ell$ but not for $m < \ell$.
By the multiplicity of the previous paragraph, $O_z = L_z$.
If $3$ divides $\ell$, then $z^2 + z + 1$ divides $z^\ell - 1$ 
and the $z$-coordinates takes values
precisely among the roots of $f(z) = (z^\ell - 1)/(z^2+z+1)$.
By the multiplicity, $f(z)^{2^i} \in J$ 
and any smaller power than $2^i$ will invalidate the membership.
It remains to find the minimum $m \ge 0$ such that
$f(z)^{2^i} | (z^m - 1)$.
Let $m = 2^j m'$ with $m'$ odd.
For the multiplicity, we know $j \ge i$ and $f(z) | (z^{m'} - 1)$.
The degree of $f(z)$ is $\ell - 2$ which must be $\le m'$,
so minimum $m'$ is one of $\ell -2, \ell -1, \ell$.
If $\ell = 3$, then $m' = 1$ is the minimum.
If $\ell > 3$, then a root of $f(z)$ 
is a primitive $\ell$-th root of unity
which can be a root of $z^{m'}-1$ only if $m' = \ell$.
\end{proof}

\subsubsection{Translation Defects}
As mentioned above, the degeneracy due to twisted boundary conditions on an $L_x \times L_y$ torus is equal to the number of $T_x^{L_x}$-symmetric $T_y^{L_y}$ defects, and vice versa~\cite{barkeshli2014symmetry}. 
 By combining Eq.~\eqref{eq:ntc} with the degeneracy formula in Eq.~\eqref{eq:ccdegen} and Lemma~\ref{lem:1} we can compute a large range of ground state degeneracies and hence numbers of symmetric defects. 
Fix a compactification radius $L_z$, for $L_x=L_y=O$ we have $d_1 = 2^{-l_z} n_{L_z}$. The degeneracy $2^d$ for $L_x=O/2^{i},\, L_y=O/2^{j},$ specifies the number of $T_x^{O/2^{i}}$-symmetric $T_y^{O/2^{j}}$ defects as follows 
\begin{align}
    d=
    \begin{cases}
    2^{l_z+1} d_1 - 1 & \text{ for } i=0,\,j=1,
    \\
    2^{l_z+2-i} d_1 - 1 & \text{ for } i=j>0,
    \\
     2^{l_z+2-i} d_1  & \text{ for } i>j>0,
    \\
     2^{l_z+2-j} d_1  & \text{ for } j>i\geq 0,\, j>1.
    \end{cases}
\end{align}

Since the compactified models are equivalent to copies of toric code, they only support abelian anyons. Hence all the $T_x^{L_x}$ translation defects have the same quantum dimension, as they are related by fusion with the anyons. We recall that the total quantum dimension of any defect sector is equal to that of the anyons~\cite{barkeshli2014symmetry}. This implies that the number of $T_x^{L_x}$ defects, $N_{L_x}$, is related to their common quantum dimension, $\delta_{L_x}$, via $N_{L_x} \delta_{L_x}^2 = 4^{n_{L_z}}$. Since $\log_2 N_{L_x} = d$, where $d$ is $\log_2$ of the ground space degeneracy of the cubic code on an $L_x \times O(L_z) \times L_z$ 3D torus, we have $\log_2 \delta_{L_x} = { n_{L_z} - d/2}$.

Since $N_{L_x}$ is the same as the number of anyon types that are symmetric under a translation $T_x^{L_x}$ we can compute it directly from the dimension of the invariant subspace under $\bar{T}_x^{L_x}$. In particular, we have computed the number of unit translation $T_x$ defects in this way for the values of $L_z$ listed in table~\ref{table_copies}
\begin{align}
    \log_2 N_1 = 
    \begin{cases}
    4 &  \text{ for }  2\ | L_z, 
    \\
    2 &  \text{ for }  2 \not | L_z. 
    \end{cases}
\end{align} 
Hence the quantum dimensions $\delta_1$ fit the following formula as a function of $L_z$ 
\begin{equation}
	\log_2 \delta_1 =
	\begin{cases}
		n_{L_z}-2 &  \text{ for } 2\ | L_z, \\
		n_{L_z}-1 &  \text{ for } 2 \not | L_z .
	\end{cases}
\end{equation}
Our above results for $N_1$ imply that changing the periodic boundary conditions of the cubic code slightly: from $O(L_z) \times O(L_z) \times L_z$ to $\left[ O(L_z) +1\right] \times O(L_z) \times L_z$, can result in an extensive jump of the ground space degeneracy: from $4^{n_{L_z}}$ to $N_1$. 
This provides an appealing interpretation for the seemingly erratic behaviour of the ground space degeneracy of the cubic code on periodic boundary conditions in terms of the better understood phenomena of translation symmetry twisted boundary conditions of a 2D topological order.

\subsection{Cubic code B}

Our second example is cubic code B~\cite{haah2014bifurcation} which is specified by the following generators
\begin{align}
\begin{array}{c}
\drawgenerator{}{}{XXXI}{}{IIXX}{IXIX}{}{XIII}
\quad
\drawgenerator{ZZII}{IZIZ}{}{IIZI}{}{}{ZIZZ}{}
\\
\drawgenerator{}{}{XIIX}{}{IIXI}{XXXX}{}{IXII}
\quad
\drawgenerator{ZIII}{ZZZZ}{}{IIIZ}{}{}{IZZI}{}
\end{array}
\, .
\end{align}
This model was found to bifurcate from the original cubic code during real-space entanglment-renormalization and hence is also type-II. More specifically, a local unitary circuit $U$ was found which is invariant under a coarse-grained translation group, generated by $T_x^2,T_y^2,T_z^2,$ and satisfies
\begin{align}
    U H_A(a) U^\dagger \cong H_A(2a) + H_B(2a) 
    \, .
\end{align}
In the above $H_A(a)$ denotes the original cubic code A Hamiltonian with lattice spacing $a$, $H_B(2a)$ denotes the B cubic code Hamiltonian with lattice spacing $2a$, and $\cong$ denotes that the stabilizer group of two Hamiltonians is the same, up to tensoring with ancilla qubits in the product state $\ket{0}^{\otimes N}$. We remark that the equivalence $\cong$ does not imply the stabilizer generators on the left and right strictly match after the removal of ancillas. 

Cubic code B was further found to be self-bifurcating. That is under another local unitary circuit $V$, which respects the coarse-grained translation group generated by $T_x^2,T_y^2,T_z^2,$ \R{haah2014bifurcation} found
\begin{align}
    V H_B(a) V^\dagger \cong H_B(2a) + H_B'(2a) 
    \, .
\end{align}

For the compactification of cubic code B along the $\hat{z}$ direction  we have found 
\begin{align}
    n^B_{L_z} &= n^A_{L_z} \, , \nonumber \\
    O^B(L_z) &= O^A(L_z)/2 \, , \nonumber \\
    \log_2 N_1^B &= 4 \, .
\end{align}
Hence the cubic codes A and B lead to the same topological phase under compactification. However, this does not imply the original models lie in the same topological phase. 
On the other hand, the translation symmetry-enriched phases after compactification are different, but this is not yet sufficient to show that the original models lie in different phases. 
This is because one is allowed an arbitrary, finite coarse-graining step when comparing the original models. 
An obvious first step is to coarse-grain the $\hat{x}$ and $\hat{y}$ directions of cubic code A by a factor of two, to ensure the order of the translation actions match. After this transformation, the symmetry-enriched topological orders resulting from the compactification of cubic codes A and B are still distinct. To see this we consider the number of unit translation defects for the compactified coarse-grained A code, which equals the number of $T_x^2$ translation defects for the original compactified A code, 
\begin{align}
    \log N_2^A = 
      \begin{cases}
    8 &  \text{ for }   4 \ | L_z,
    \\
    6 &  \text{ for }  2 \ | L_z,\  4 \not | L_z,
    \\
    4 &  \text{ for }  2 \not | L_z. 
    \end{cases}
\end{align}
No coarse-graining in the $\hat{z}$ direction can bring this into agreement with $N_1^B$. However, this is still only a necessary condition for the original cubic codes A and B to lie in distinct phases.

It was shown in \R{haah2014bifurcation} that it is impossible to find any coarse-grainings of cubic codes A and B that make their ground space degeneracies agree for periodic boundary conditions on all system sizes. Hence they are distinct topological phases of matter. 
From the point of view of compactification, this implies that no coarse-grainings can be made such that cubic code A and B lead to the same symmetry-enriched topological orders for all compactifications. 
In particular, since the ground space degeneracies cannot always be made to match, even after an arbitrary coarse-graining the number of symmetric defects of all types in the compactified models cannot be made to match.

\section{ Discussion and conclusions}
\label{sec:conclusion}

In this work we have studied compactifications of translation invariant stabilizer models with fracton topological order. 
We found that type-I fracton order is reflected by a combination of topological order and symmetry breaking in the compactified model that has a simple scaling with the compactification radius. 
More interestingly, we found that type-II fracton order manifests in the symmetry-enriched topological order of the compactified model. The topological order has a relatively simple scaling with the compactification radius, but is enriched in a complicated way by translation symmetry. We have analytically and numerically studied various aspects of the compactification in detail for the 2D and 3D toric code, X-cube model, and the cubic code. Our results on the cubic code provide an understanding of the model's complicated ground space degneracy in terms of twisted boundary conditions for copies of the 2D toric code enriched by translation symmetry. More generally our results draw a connection between fracton topological phases in 3D and translation symmetry-enriched topological phases in 2D which may prove useful for their classification.

Furthermore, we encountered nontrivial subsystem symmetry-enriched phases in our study of the compactified cubic code, which led to spurious contributions to the topological entanglement entropy~\cite{KitaevPreskill,levinwenentanglement}, see \R{Williamson2018} for further details. The possibility of such spurious contributions in two and three dimensions cause a complication for proposals to extract information about fracton topological orders from the scaling behaviour of the entanglement entropy~\cite{PhysRevB.97.144106,PhysRevB.97.125101,PhysRevB.97.125102}. 

It would be interesting to explore the implications of compactification for decoding fracton topological codes~\cite{PhysRevLett.107.150504,bravyi2013quantum}. While this may not be particularly useful for type-I models~\cite{Brown2019}, since type-II models remain topological codes under compactification one may be able to apply techniques based on the 2D toric code. In particular it should be possible to correct errors that are nonlocal in one spatial direction. 

Another apparent future direction is the study of compactification with different boundary conditions. Open boundary conditions are particularly appealing due to their practical relevance. It would also be interesting to study the anyons and symmetry defects found after compactification as linelike objects in the uncompactified model. We plan to explore these directions in future work.

\subsection{Further examples of compactification}
The appendix contains numerical calculations of $n_{L_z}$, the number of copies of toric code, for a wide range of fracton stabilizer models that have been compactified in several different directions. It is clear that the scaling of $n_{L_z}$ for TQFT, foliated fracton, fractal and type-II fracton, topological stabilizer models is qualitatively different. 
The sorting of these 3D topological stabilizer is discussed in detail in a forthcoming work~\cite{Dua_2019_FTO}. 
Our results demonstrate that compactification can serve as a useful tool in the sorting and classification of fracton topological orders. 
In \R{Dua_2019_FTO} we go beyond this to discuss the coarse sorting and classification of 3D topological stabilizer models using tools such as the deformability of logical operators and anti-commuting logical operator pairs supported on different configurations. 

\acknowledgments
AD thanks Mengzhen Zhang and Guanyu Zhu for useful discussions. This work is supported by start-up funds at Yale University (DW and MC) and the Alfred P. Sloan foundation (MC). 

\bibliographystyle{apsrev_1}
\bibliography{DomBib}

\newpage
\begin{widetext}
\appendix
\section{Compactification of further examples}

In this appendix we report our numerical calculations of the number of copies of toric code as a function of compactification radius along each axis. We consider a wide range of fracton stabilizer models. The number of toric codes is reported for the compactified model after all local symmetry breaking degeneracy has been lifted by adding 2D local operators to the compactified Hamiltonian. 
The models are labelled CC0-17, standing for cubic code 0-17 following the notation in \R{Haah2013} which differs slightly from \R{haah2011local}. 
We also consider the 3D toric code, labeled 3D TC, Chamon's model~\cite{chamon2005quantum,bravyi2011topological}, labeled Chm, another model found by Castelnovo and Chamon~\cite{doi:10.1080/14786435.2011.609152} and also Yoshida~\cite{yoshida2013exotic}, which we label Y, the X-cube model and checkerboard model~\cite{vijay2016fracton}, labeled XC and CB respectively, and finally the so-called type-I and II spin models in \R{PhysRevB.96.165105}, labeled H-I and H-II respectively. 
We remark that the checkerboard model is local unitary equivalent to two copies of the X-cube model~\cite{shirley2018Foliated} and similarly the H-I model is mapped to two copies of the checkerboard model by applying swap gates to half the sites. We also point out that H-II has \emph{not} been shown to be a type-II model. 

In the tables below we group the fracton models that haven't (have) been previously shown to support a string operator, i.e. the first table contains the type-II models, and some that may be type-I, while the second contains TQFT and type-I models and so on. We show our results for compactification in the $\hat{z},\hat{y},$ and $\hat{x}$ directions, respectively. The quantity $L_i$ denotes the compactification radius, and the superscript $*$ denotes that CB* and H-I* have been coarse grained by a factor of two. 

We observe that the scaling behaviour of the number of toric codes in the compactified model is indicative of the model's type. For TQFT (3D TC) the number of 2D toric codes is constant. For foliated type-I fracton orders (Chm, XC, CB, H-I) the number scales linearly. For type-II and fractal type-I~\cite{Dua_2019_FTO} (CC0-17, H-II, Y) there is at least one direction where the number fluctuates as it grows with the compactification radius. 

\begin{table}[h]
    \centering
\begin{tabular}{|c|c|c|c|c|c|c|c|c|c|c|c|}
\hline 
$L_{z}$ & CC1 & CC2 & CC3 & CC4 & CC5 & CC6 & CC7 & CC8 & CC9 & CC10 & H-II\tabularnewline
\hline 
\hline 
2 & 4 & 2  & 4 & 4 & 2 & 2 & 2 & 2 & 4 & 2  & 8\tabularnewline
\hline 
3 & 2 & 1 & 2 & 4 & 1 & 1 & 5 & 5 & 6 & 1 & 4\tabularnewline
\hline 
4 & 8 & 4 & 8 & 8 & 4 & 4 & 4 & 4 & 8 & 4 & 16\tabularnewline
\hline 
5 & 10 & 9 & 10 & 10 & 5 & 5 & 9 & 9 & 10 & 9 & 20\tabularnewline
\hline 
6 & 4 & 2 & 4 & 8 & 2 & 2 & 10 & 10 & 12 & 2 & 16\tabularnewline
\hline 
7 & 14 & 13 & 14 & 14 & 7 & 7 & 13 & 13 & 14 & 13 & 28\tabularnewline
\hline 
8 & 16 & 8 & 16 & 16 & 8 & 8 & 8 & 8 & 16 & 8 & 32\tabularnewline
\hline 
9 & 14 & 13 & 14 & 16 & 7 & 7 & 17 & 17 & 18 & 13 & 28\tabularnewline
\hline 
10 & 20 & 18 & 20 & 20 & 10 & 10 & 18 & 18 & 20 & 18 & 40\tabularnewline
\hline 
\end{tabular}
\caption{Number of copies of toric code as a function of compactification radius along $\hat{z}$.}
\end{table}

\begin{table}[h]
    \centering
\begin{tabular}{|c|c|c|c|c|c|c|c|c|c|c|c|c|c|c|}
\hline 
$L_{z}$ & 3D TC & Chm & X-cube & CB{*} & H-I{*} & CC0 & CC11 & CC12 & CC13 & CC14 & CC15 & CC16 & CC17 & Y\tabularnewline
\hline 
\hline 
2 & 1 & 2 & 1 & 2 & 4 & 8 & 4 & 4 & 4 & 2 & 4 & 2 & 4 & 0\tabularnewline
\hline 
3 & 1 & 4 & 2 & 4 & 8 & 12 & 4 & 6 & 2 & 6 & 6 & 5 & 2 & 0\tabularnewline
\hline 
4 & 1 & 6 & 3 & 6 & 12 & 16 & 8 & 8 & 8 & 3 & 8 & 4 & 8 & 0\tabularnewline
\hline 
5 & 1 & 8 & 4 & 8 & 16 & 20 & 10 & 10 & 10 & 10 & 10 & 9 & 10 & 0\tabularnewline
\hline 
6 & 1 & 10 & 5 & 10 & 20 & 24 & 8 & 12 & 4 & 10 & 12 & 10 & 8 & 0\tabularnewline
\hline 
7 & 1 & 12 & 6 & 12 & 24 & 16 & 14 & 14 & 14 & 14 & 14 & 13 & 14 & 0\tabularnewline
\hline 
8 & 1 & 14 & 7 & 14 & 28 & 32 & 16 & 16 & 16 & 11 & 16 & 8 & 16 & 0\tabularnewline
\hline 
9 & 1 & 16 & 8 & 16 & 32 & 36 & 16 & 18 & 14 & 18 & 18 & 17 & 14 & 0\tabularnewline
\hline 
10 & 1 & 18 & 9 & 18 & 36 & 40 & 20 & 20 & 20 & 18 & 20 & 18 & 20 & 0\tabularnewline
\hline 
\end{tabular}
\caption{Number of copies of toric code as a function of compactification radius along $\hat{z}$.}
\end{table}

\begin{table}[h]
    \centering
\begin{tabular}{|c|c|c|c|c|c|c|c|c|c|c|c|}
\hline 
$L_{y}$ & CC1 & CC2 & CC3 & CC4 & CC5 & CC6 & CC7 & CC8 & CC9 & CC10 & H-II\tabularnewline
\hline 
\hline 
2 & 4 & 2 & 2 & 2 & 2 & 2 & 4 & 2 & 2 & 2 & 8\tabularnewline
\hline 
3 & 2 & 1 & 1 & 5 & 5 & 5 & 6 & 5 & 5 & 1 & 4\tabularnewline
\hline 
4 & 8 & 4 & 4 & 4 & 4 & 4 & 8 & 4 & 4 & 4 & 16\tabularnewline
\hline 
5 & 10 & 9 & 9 & 9 & 9 & 9 & 10 & 9 & 9 & 9 & 20\tabularnewline
\hline 
6 & 4 & 2 & 2 & 10 & 10 & 10 & 12 & 10 & 10 & 2 & 16\tabularnewline
\hline 
7 & 14 & 13 & 13 & 13 & 13 & 13 & 14 & 13 & 13 & 13 & 28\tabularnewline
\hline 
8 & 16 & 8 & 8 & 8 & 8 & 8 & 16 & 8 & 8 & 8 & 32\tabularnewline
\hline 
9 & 14 & 13 & 13 & 17 & 17 & 17 & 18 & 17 & 17 & 13 & 28\tabularnewline
\hline 
10 & 20 & 18 & 18 & 18 & 18 & 18 & 20 & 18 & 18 & 18 & 40\tabularnewline
\hline 
\end{tabular}
\caption{Number of copies of toric code as a function of compactification radius along $\hat{y}$.}
\end{table}

\begin{table}[h]
    \centering
\begin{tabular}{|c|c|c|c|c|c|c|c|c|c|c|c|c|c|c|}
\hline 
$L_{y}$ & 3D TC & Chm & X-cube & CB{*} & H-I{*} & CC0 & CC11 & CC12 & CC13 & CC14 & CC15 & CC16 & CC17 & Y\tabularnewline
\hline 
\hline 
2 & 1 & 2 & 1 & 2 & 4 & 8 & 4 & 2 & 4 & 2 & 4 & 2 & 4 & 0\tabularnewline
\hline 
3 & 1 & 4 & 2 & 4 & 8 & 12 & 4 & 5 & 4 & 3 & 6 & 5 & 2 & 2\tabularnewline
\hline 
4 & 1 & 6 & 3 & 6 & 12 & 16 & 8 & 4 & 8 & 4 & 8 & 4 & 8 & 0\tabularnewline
\hline 
5 & 1 & 8 & 4 & 8 & 16 & 20 & 10 & 9 & 10 & 9 & 10 & 9 & 10 & 4\tabularnewline
\hline 
6 & 1 & 10 & 5 & 10 & 20 & 24 & 8 & 10 & 8 & 6 & 12 & 10 & 4 & 4\tabularnewline
\hline 
7 & 1 & 12 & 6 & 12 & 24 & 16 & 14 & 13 & 14 & 13 & 14 & 13 & 14 & 6\tabularnewline
\hline 
8 & 1 & 14 & 7 & 14 & 28 & 32 & 16 & 8 & 16 & 8 & 16 & 8 & 16 & 3\tabularnewline
\hline 
9 & 1 & 16 & 8 & 16 & 32 & 36 & 16 & 17 & 16 & 15 & 18 & 17 & 14 & 8\tabularnewline
\hline 
10 & 1 & 18 & 9 & 18 & 36 & 40 & 20 & 18 & 20 & 18 & 20 & 18 & 20 & 8\tabularnewline
\hline 
\end{tabular}
\caption{Number of copies of toric code as a function of compactification radius along $\hat{y}$.}
\end{table}

\begin{table}[h]
    \centering
\begin{tabular}{|c|c|c|c|c|c|c|c|c|c|c|c|}
\hline 
$L_{x}$ & CC1 & CC2 & CC3 & CC4 & CC5 & CC6 & CC7 & CC8 & CC9 & CC10 & H-II\tabularnewline
\hline 
\hline 
2 & 4 & 4 & 2 & 2 & 2 & 4 & 2 & 2 & 2 & 2 & 8\tabularnewline
\hline 
3 & 2 & 2 & 1 & 3 & 5 & 2 & 5 & 5 & 5 & 1 & 4\tabularnewline
\hline 
4 & 8 & 8 & 4 & 4 & 4 & 8 & 4 & 4 & 4 & 4 & 16\tabularnewline
\hline 
5 & 10 & 10 & 9 & 9 & 9 & 10 & 9 & 9 & 9 & 9 & 20\tabularnewline
\hline 
6 & 4 & 4 & 2 & 6 & 10 & 4 & 10 & 10 & 10 & 2 & 16\tabularnewline
\hline 
7 & 14 & 14 & 13 & 13 & 13 & 14 & 13 & 13 & 13 & 13 & 28\tabularnewline
\hline 
8 & 16 & 16 & 8 & 8 & 8 & 16 & 8 & 8 & 8 & 8 & 32\tabularnewline
\hline 
9 & 14 & 14 & 13 & 15 & 17 & 14 & 17 & 17 & 17 & 13 & 28\tabularnewline
\hline 
10 & 20 & 20 & 18 & 18 & 18 & 20 & 18 & 18 & 18 & 18 & 40\tabularnewline
\hline 
\end{tabular}
\caption{Number of copies of toric code as a function of compactification radius along $\hat{x}$.}
\end{table}

\begin{table}[h]
\centering
\begin{tabular}{|c|c|c|c|c|c|c|c|c|c|c|c|c|c|c|}
\hline 
$L_{x}$ & 3D TC & Chm & X-cube & CB{*} & H-I{*} & CC0 & CC11 & CC12 & CC13 & CC14 & CC15 & CC16 & CC17 & Y\tabularnewline
\hline 
\hline 
2 & 1 & 2 & 1 & 2 & 4 & 8 & 2 & 4 & 4 & 2 & 2 & 2 & 4 & 0\tabularnewline
\hline 
3 & 1 & 4 & 2 & 4 & 8 & 12 & 1 & 6 & 6 & 2 & 5 & 5 & 2 & 2\tabularnewline
\hline 
4 & 1 & 6 & 3 & 6 & 12 & 16 & 4 & 8 & 8 & 3 & 4 & 4 & 8 & 0\tabularnewline
\hline 
5 & 1 & 8 & 4 & 8 & 16 & 20 & 5 & 10 & 10 & 10 & 9 & 9 & 10 & 4\tabularnewline
\hline 
6 & 1 & 10 & 5 & 10 & 20 & 24 & 4 & 12 & 12 & 2 & 10 & 10 & 4 & 4\tabularnewline
\hline 
7 & 1 & 12 & 6 & 12 & 24 & 16 & 7 & 14 & 14 & 14 & 13 & 13 & 14 & 6\tabularnewline
\hline 
8 & 1 & 14 & 7 & 14 & 28 & 32 & 8 & 16 & 16 & 11 & 8 & 8 & 16 & 3\tabularnewline
\hline 
9 & 1 & 16 & 8 & 16 & 32 & 36 & 7 & 18 & 18 & 14 & 17 & 17 & 14 & 8\tabularnewline
\hline 
10 & 1 & 18 & 9 & 18 & 36 & 40 & 10 & 20 & 20 & 18 & 18 & 18 & 20 & 8\tabularnewline
\hline 
\end{tabular}
\caption{Number of copies of toric code as a function of compactification radius along $\hat{x}$.}
\end{table}

\end{widetext}
\end{document}